\theoremstyle{plain}
\newtheorem{theorem}{Theorem}
\newtheorem{proposition}[theorem]{Proposition}
\newtheorem{corollary}[theorem]{Corollary}
\newtheorem{lemma}[theorem]{Lemma}
\theoremstyle{definition}
\newtheorem{definition}[theorem]{Definition}
\newtheorem{example}[theorem]{Example}
\newcommand{\spoiler}{\textsc{Spoiler}\xspace}
\newcommand{\duplicator}{\textsc{Duplicator}\xspace}
\newcommand{\Nat}{\mathbb{N}}
\def\newarrow#1{\mathop{{\hbox{\setbox0=\hbox{$\scriptstyle{#1\quad}$}{$%
\mathrel{\mathop{\setbox1=\hbox to
\wd0{\rightarrowfill}\ht1=3pt\dp1=-2pt\box1}\limits^{#1}}%
$}}}}}
\renewcommand{\epsilon}{\varepsilon}
\newcommand{\Transition}[3]{\ensuremath{#1 \newarrow{#2} #3}}
\newcommand{\simugame}[3]{\ensuremath{\mathcal{G}^{#1}(#2,#3)}}
\newcommand{\simu}[1]{\ensuremath{\mathbin{\sqsubseteq^{#1}}}}
\newcommand{\notsimu}[1]{\ensuremath{\mathbin{\nsqsubseteq^{#1}}}}
\newcommand{\project}[2]{\ensuremath{#1{\downarrow}_{#2}}}
\title{Two-Buffer Simulation Games}
\author{Milka Hutagalung$^1$ \qquad Norbert Hundeshagen$^1$ \qquad Dietrich Kuske$^2$ \\ Martin Lange$^1$  \qquad Etienne Lozes$^3$  
\institute{$^1$ School of Electrical Engineering and Computer Science \\
University of Kassel, Germany \\
$^2$ Technische Universit\"at Ilmenau, Germany \\
$^3$ LSV, ENS Cachan, France}
}
\begin{document}
\maketitle

\begin{abstract}
We consider simulation games played between Spoiler and Duplicator on two B\"uchi automata 
in which the choices made by Spoiler can be buffered by Duplicator in two different buffers before she 
executes them on her structure.
Previous work on such games using a single buffer has shown that they are useful to approximate 
language inclusion problems. We study the decidability and complexity and show that games with 
two buffers can be used to approximate corresponding problems on finite transducers, i.e.\ the
inclusion problem for rational relations over infinite words.
\end{abstract}


\section{Introduction}

Simulation is a pre-order between labelled transition systems $\mathcal{T}$ and $\mathcal{T}'$ that formalises the idea
that ``$\mathcal{T}'$ can do everything that $\mathcal{T}$ can''. Formally, it relates the states of the two transition
systems such that each state $t'$ in $\mathcal{T}'$ that is connected to some $t$ in $\mathcal{T}$ can mimic the immediate
behaviour of $t$, i.e.\ it carries the same label, and whenever $t$ has a successor then $t'$ has a matching one, too. 

Simulation relations have become popular in the area of automata theory because they can be used to efficiently approximate
language inclusion problems for automata on finite or infinite words and trees and to minimise such automata 
\cite{lncs575*255,EtessamiWS01,journals/tcs/FritzW05,conf/tacas/AbdullaBHKV08}. Simulation relationship is usually
computable in polynomial time whereas language inclusion problems are PSPACE-complete for typical (finite, B\"uchi, parity,
etc.) automata on words and EXPTIME-complete for such automata on trees. Approximation is to be understood in this context as
follows: a positive instance of simulation yields a positive instance of language inclusion but not necessarily vice-versa.

Games played between two players -- usually called \spoiler and \duplicator -- on the state 
spaces of two automata yield a very intuitive characterisation which can be used to reason about simulations.
It is very easy to construct examples of pairs of automata such that language inclusion holds but simulation does not, i.e.\ 
the game makes \duplicator too weak. This has led to the study of several extensions of simulation relations and games with
the aim of making \duplicator stronger or \spoiler weaker whilst retaining a better complexity than language inclusion. Two examples 
in this context are \emph{multi-pebble simulation} \cite{conf/concur/Etessami02} and \emph{multi-letter simulation} \cite{HutagalungLL:LATA13,MayrC13}. 
\begin{itemize}
\item In \emph{multi-pebble simulation} \cite{conf/concur/Etessami02}, \duplicator controls several pebbles and can therefore 
     act in foresight of several of \spoiler's later moves. This kind of simulation is still computable in polynomial time 
     for any fixed number of pebbles. It forms a hierarchy of more and more refined simulation relations that approximate  
     language inclusion.
\item In \emph{multi-letter simulation} \cite{HutagalungLL:LATA13,MayrC13}, \spoiler is forced to reveal more than one transition
     of the run that he constructs in the limit. Again, these relations are computable in polynomial time for any fixed look-ahead, that is the
     number of symbols that \spoiler advances in the construction of his run ahead of \duplicator's. In fact, all are 
     computable in time linear in the size of each underlying automaton -- unlike multi-pebble simulations --, and polynomial
     of a higher degree only in the size of the underlying alphabet. They also form a hierarchy between ordinary simulation and 
     language inclusion.
\end{itemize}
In multi-letter simulations \spoiler is forced to reveal more than one transition in each round. This look-ahead can be realised using a bounded 
FIFO buffer which is filled by \spoiler and emptied by \duplicator. This has led to another extension of simulation.
\begin{itemize}
\item In \emph{buffered simulations} \cite{DBLP:journals/corr/HutagalungLL14}, the letters chosen by \spoiler get stored in
     an unbounded FIFO buffer, and \duplicator consumes them to form her moves. These games show a limit of what is 
     possible in terms of approximating language inclusion: buffered simulation is in general EXPTIME-complete \cite{DBLP:journals/corr/HutagalungLL14}, 
     i.e.\ even
     harder than language inclusion, yet there are pairs of B\"uchi automata on which \spoiler wins the buffered simulation
     game even though language inclusion holds. This is only true for $\omega$-languages, though. On finite words, buffered
     simulation captures language inclusion because \duplicator can simply wait for \spoiler to produce an entire word before
     she makes any moves. 

     This may raise the question of why buffered simulation is EXPTIME-complete when it captures a PSPACE-complete problem in
     this case. There is in fact a natural restriction of buffered simulation which is only PSPACE-complete \cite{DBLP:journals/corr/HutagalungLL14};
     it requires \duplicator to always flush the entire buffer when she moves. Note that simulation games on automata on finite
     words can be played with this restriction since they degenerate to games with no proper alternation: first \spoiler produces
     a word, then \duplicator consumes it entirely from the buffer.
\end{itemize}
In this paper we consider a natural extension of simulation games
played with FIFO buffers, namely \emph{two-buffer} simulations. Again,
\spoiler and \duplicator each move a pebble along the transitions of
two B\"uchi automata, forming runs, and \duplicator wins a play if her
run is accepting or \spoiler's is not. The letters chosen by \spoiler
get put into one of two buffers from which \duplicator takes letters
to form her run.  Note that two-buffer games do \emph{not} approximate
language inclusion on B\"uchi automata because the availability of
more than one buffer introduces a commutativity property between
alphabet symbols.  Since this commutativity is based on a partition of
the alphabet, it matches the commutativity in the direct product of
two free monoids. Consequently, as an application, we obtain
polynomial-time computable approximation procedures for the otherwise
undecidable inclusion problem between rational relations over infinite
words.

The paper is organised as follows. Section~\ref{sec:prel} recalls necessary definitions etc.\ on B\"uchi automata, simulation
games and buffered simulation games. In Section~\ref{sec:multibuf} we introduce and study two-buffer simulations. 
We show that they are potentially interesting from a point of efficiency: they can be decided in polynomial time for any fixed number
of bounded buffers and any fixed buffer capacity. We also examine when undecidability occurs: it is not surprising that using
two unbounded buffers leads to undecidability since problems involving two unbounded FIFO buffers are typically undecidable,
for instance the reachability problem for a network of finite state machines communicating through unbounded perfect FIFO 
channels is undecidable for various set-ups~\cite{BrandZ83,ChambartS08}. 
In Section~\ref{sec:transduce} we present our application of
two-buffer simulation games which features the aforementioned partial
commutativity to the inclusion problem between rational relations over
infinite words.

\section{Preliminaries}
\label{sec:prel}

\subsubsection*{Automata.} A \emph{B\"uchi automaton} is a tuple
$\mathcal{A} = (Q,\Sigma,q_I,\delta,F)$ where $Q$ is a finite set of
states with a designated starting state $q_I \in Q$, $\Sigma$ is the
underlying alphabet, $\delta \subseteq Q \times \Sigma \times Q$ is
the transition relation and $F \subseteq Q$ is a designated set of
accepting states.

The language $L(\mathcal{A})$ of a B\"uchi automaton is, as usual, the
set of all infinite words for which there is a run starting in $q_I$
that visits some states in $F$ infinitely often. It is known that
single-buffer simulation games, as recalled below, can be used to
approximate B\"uchi language inclusion problems
\cite{DBLP:journals/corr/HutagalungLL14}. However, the simulation
games introduced in Section~\ref{sec:multibuf} lead away from the
problem of language inclusion between B\"uchi automata. Hence, we are
not particularly concerned with B\"uchi automata as acceptors of
$\omega$-languages. Instead it suffices at this point to simply see
them as directed graphs with labelled edges, a designated starting
state and some marked states that will be used to define winning
conditions in games played on these graphs.  We will therefore rather
write $\Transition{q}{a}{p}$ than $(q,a,p) \in \delta$ when the
underlying transition relation is clear from the context. We will also
simply speak of \emph{automata} rather than \emph{B\"uchi automata} to
take away the focus from plain language inclusion problems.

\subsubsection*{Buffered simulation.} The \emph{buffered simulation} game \cite{DBLP:journals/corr/HutagalungLL14} is played between \spoiler and \duplicator
on two automata $\mathcal{A} = (Q^{\mathcal{A}},\Sigma,q_I,\delta^{\mathcal{A}},$ $F^{\mathcal{B}})$ and 
$\mathcal{B} = (Q^{\mathcal{B}},\Sigma,p_I,\delta^{\mathcal{B}},F^{\mathcal{B}})$ on configurations of the form $(q,\beta,p)$ with $q \in Q^{\mathcal{A}}$,
$p \in Q^{\mathcal{B}}$ and $\beta \in \Sigma^*$. This additional component acts like a buffer through which the alphabet symbols chosen by \spoiler get
channelled before \duplicator can use them. Such a buffer has a capacity $k \in \Nat \cup \{\omega\}$. If $k \in \Nat$ then the buffer is called
\emph{bounded}, for $k= \omega$ it is called \emph{unbounded}.

Any play of that game starts in the configuration $(q_I,\epsilon,p_I)$. Any round that has reached a configuration $(q,\beta,p)$ proceeds as
follows.
\begin{enumerate}
\item \spoiler chooses a $q'\in Q^{\mathcal{A}}$ and $a\in\Sigma$ such
  that $\Transition{q}{a}{q'}$. Let $\beta' := a\beta$.
\item \duplicator can either skip her turn in which case the play
  proceeds in the configuration $(q',\beta',p)$. Or we have
  $\beta' = \beta'' b$ for some $b \in \Sigma$. In this case she can choose a
  $p'\in Q^{\mathcal B}$ such that $\Transition{p}{b}{p'}$, and the
  play proceeds with $(q',\beta'',p')$.
\end{enumerate}
A play $(q_0,\beta_0,p_0),(q_1,\beta_1,p_1),\ldots$ is won by \duplicator iff
\begin{itemize}
\item  $|\beta_i| \leq k$ for all $i \in \Nat$ , i.e.\ the buffer never exceeds its capacity, and
\item either 
\begin{itemize}
\item there are only finitely many $i$ such that $q_i \in F^\mathcal{A}$, or
\item \duplicator moves infinitely often, and there are infinitely many $i$ such that $p_i \in F^{\mathcal{B}}$.

\end{itemize}
\end{itemize} 
Otherwise \spoiler wins. We write $\mathcal{A} \simu{k} \mathcal{B}$ to state that duplicator has a winning strategy for the buffered simulation game on
$\mathcal{A}$ and $\mathcal{B}$ with buffer capacity $k$.

Note that according to these winning conditions, the buffer capacity is only checked after \duplicator's moves. Hence, it is possible to play
on a buffer of capacity $k=0$; this just means that \duplicator has to consume the alphabet symbol chosen by \spoiler immediately.
We also remark that in this definition of buffered simulation game, \duplicator only ever consumes a single alphabet symbol in each of her
moves. One could equally allow her to consume several symbols from the buffer. This has no immediate advantage for her; if she has a winning
strategy then she also has one in which she only ever makes moves on one symbol. The reason for this is the simple fact that the buffer 
gives her a look-ahead on the choices made by \spoiler in an ordinary simulation game, and winning is monotone in the amount of information
available about the opponent's future moves.

For two automata $\mathcal A$ and $\mathcal B$, we write
$\mathcal A\simu{}\mathcal B$ if \duplicator has a winning strategy in
the ordinary fair simulation game~\cite{EtessamiWS01}. 

\begin{proposition}[\cite{HutagalungLL:LATA13,DBLP:journals/corr/HutagalungLL14}]
\label{prop:hierarchy}
We have
\begin{enumerate}[label=\alph*)]
\item \label{prop:hierarchy:ordsimu} $\simu{} = \simu{0}$, 
\item $\simu{k} \subsetneq \simu{k+1}$ for any $k \in \Nat$,
\item $\bigcup_{k\geq 0}\simu{k} \subsetneq \simu{\omega}$,
\item\label{prop:hierarchy:simimpliesli}
 $\mathcal{A} \simu{\omega} \mathcal{B}$ implies $L(\mathcal{A}) \subseteq L(\mathcal{B})$ for any B\"uchi automata $\mathcal{A}$ and
      $\mathcal{B}$. The converse does not hold in general.
\end{enumerate} 
\end{proposition}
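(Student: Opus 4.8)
The plan is to split each of the four claims into its easy inclusion, which follows by unwinding the definition, and its separating witness, where the real work lies. For part~\ref{prop:hierarchy:ordsimu}, I would observe that in the capacity-$0$ game the buffer must be empty after every \duplicator move: once \spoiler plays $\Transition{q}{a}{q'}$ the buffer holds $a$, so skipping would give $|\beta_i| = 1 > 0$ and lose; she is therefore forced to answer immediately with some $\Transition{p}{a}{p'}$. Each round thus coincides with a round of the ordinary fair simulation game, and the winning conditions also agree (she moves in every round, so ``moves infinitely often'' is automatic and the $F^{\mathcal{B}}$ requirement reduces to fair-simulation acceptance), giving $\simu{} = \simu{0}$. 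The non-strict inclusions $\simu{k} \subseteq \simu{k+1}$ and $\simu{k} \subseteq \simu{\omega}$ are plain monotonicity: any play respecting $|\beta_i| \le k$ also respects the weaker capacity bound, so a winning \duplicator strategy transfers verbatim.

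For the strictness in parts~b) and~c) I would build automata in which \duplicator must resolve a nondeterministic branch in $\mathcal{B}$ only after a decisive letter that \spoiler releases late. For b), $\mathcal{A}_k$ emits $k+1$ neutral letters and then one of two distinguishing letters, while $\mathcal{B}_k$ forces the branching choice right after the first letter is consumed; a buffer of capacity $k+1$ keeps the distinguishing letter available at decision time, whereas capacity $k$ does not, letting \spoiler punish a wrong guess. This yields $\mathcal{A}_k \simu{k+1} \mathcal{B}_k$ but $\mathcal{A}_k \notsimu{k} \mathcal{B}_k$. For c), the same idea with an unbounded number $a^n$ of neutral letters followed by a marker gives a single pair lying in $\simu{\omega}$ but outside every $\simu{k}$: choosing $n > k$ defeats capacity $k$, while an unbounded buffer lets \duplicator wait for the marker and win.

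The forward implication of part~\ref{prop:hierarchy:simimpliesli} is a soundness argument. Given $w \in L(\mathcal{A})$, I fix an accepting run of $\mathcal{A}$ on $w$ and let \spoiler play exactly this run. Against \duplicator's winning strategy, \spoiler's run visits $F^{\mathcal{A}}$ infinitely often, so the winning condition forces \duplicator to move infinitely often and to visit $F^{\mathcal{B}}$ infinitely often. Since the buffer is FIFO and she consumes infinitely many letters, every letter of $w$ is eventually consumed, so her moves trace an accepting run of $\mathcal{B}$ on exactly $w$, whence $w \in L(\mathcal{B})$.

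The main obstacle is the failure of the converse in part~\ref{prop:hierarchy:simimpliesli}. The key point is that, although the buffer is unbounded, at every finite stage it holds only finitely many letters, so \duplicator always acts on a finite look-ahead. I would therefore pick an $\omega$-language whose acceptance hinges on a guess depending on the entire infinite suffix --- ``eventually always $a$'' is a natural candidate, being recognised by no deterministic B\"uchi automaton --- and realise $\mathcal{B}$ so that it must make exactly such a guess, while $\mathcal{A}$ lets \spoiler postpone the decisive suffix indefinitely. Then $L(\mathcal{A}) \subseteq L(\mathcal{B})$ holds yet $\mathcal{A} \notsimu{\omega} \mathcal{B}$. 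Verifying that \spoiler really wins the unbounded game despite \duplicator's ability to delay arbitrarily is the delicate step I expect to require the most care.
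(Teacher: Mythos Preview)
The paper does not actually prove this proposition; it is stated with citations to \cite{HutagalungLL:LATA13,DBLP:journals/corr/HutagalungLL14} and no argument is given in the text. There is therefore nothing in the paper to compare your proof against directly.

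That said, your approach is correct and matches the standard arguments from the cited works. It is also consistent with the proofs the paper \emph{does} give for the two-buffer analogues: your monotonicity-plus-separating-example strategy for parts~b) and~c) is exactly what the paper does in Theorem~\ref{thm:hierarchies} (where the separating pair uses a cycle of length $k_1+1$ to force an overflow), and your soundness argument for part~\ref{prop:hierarchy:simimpliesli} is the single-buffer special case of the paper's Lemma~\ref{lem:relationship}. Your choice of an ``eventually always~$a$'' language for the failure of the converse is the canonical one and works for the reason you state: \duplicator's look-ahead is always finite, so \spoiler can outlast any commitment she makes. Nothing is missing.
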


\section{Two-Buffer Simulations}
\label{sec:multibuf}

\subsection{Simulation Games Using Two Buffers}
We fix two automata $\mathcal{A} = (Q^{\mathcal{A}},\Sigma,q_I,\delta^{\mathcal{A}},$ $F^{\mathcal{B}})$ and 
$\mathcal{B} = (Q^{\mathcal{B}},\Sigma,p_I,\delta^{\mathcal{B}},F^{\mathcal{B}})$ and develop the theory of two-buffer simulation games with respect
to these two automata. 

We assume a mapping $\sigma: \Sigma \to \{1,2\}$ which assigns a buffer index to each input symbol. Hence, the buffer that a symbol is put into
is determined by the symbol itself, not by one of the players.

\begin{definition} \rm Let $(k_1,k_2)\in(\Nat\cup\{\omega\})^2$ be a
  pair of buffer capacities, i.e.\ buffers can have bounded or
  unbounded
  capacity.
  The \emph{two-buffer simulation game with capacities $(k_1,k_2)$},
  denoted $\simugame{k_1,k_2}{\mathcal{A}}{\mathcal{B}}$, is played
  between players \spoiler and \duplicator on $\mathcal{A}$ and
  $\mathcal{B}$ using two FIFO-buffers, of capacities $k_1, k_2$ and
  initially empty, and two tokens which are initially placed on $q_I$
  and $p_I$. A configuration is denoted by $(q,\beta_1,\beta_2,p)$,
  consisting of the two states which currently carry the tokens on the
  left and right and the contents of the two buffers written as finite
  words over $\Sigma$ in between.  The initial configuration is
  $(q_I,\epsilon,\epsilon,p_I)$. A round consists of a move by player
  \spoiler followed by a move by \duplicator.  In a configuration of
  the form $(q,\beta_1,\beta_2,p)$,
\begin{enumerate}
\item \spoiler selects a $q'\in Q^{\mathcal A}$ and an $a \in \Sigma$ such that $\Transition{q}{a}{q'}$. Let $i = \sigma(a)$ and update the buffers as follows:
      $\beta_i := a\beta_i$, i.e.\ $a$ gets appended to the $i$-th buffer. 
\item \duplicator repeats the following step $k$ times for some $k$ with $0 \le k \le |\beta_1|+|\beta_2|$.
            \begin{itemize}
\item She selects a $j \in \{1,2\}$ such that $\beta_j = \gamma b$ for some $b \in \Sigma$, as well as a $p'\in Q^{\mathcal B}$ with $\Transition{p}{b}{p'}$.
              Then she updates the buffers as follows: $\beta_j := \gamma$, i.e.\ $b$ gets removed from the $j$-th buffer. She proceeds with 
              $(q',\beta_1,\beta_2,p')$, i.e.\ with the new state $p'$ instead of $p$.
\end{itemize}
\end{enumerate}
A play is a sequence $(q_0,\beta^0_1,\beta^0_2,p_0),(q_1\beta^1_1,\beta^1_2,p_1),\ldots$ of configurations. 
It is won by \duplicator iff
\begin{itemize}
\item $|\beta^i_j| \leq k_j$ for all $i \in \Nat$ and $j\in\{1,2\}$,
  i.e.\ no buffer ever exceeds its capacity, and
\item either 
\begin{itemize}
\item there are only finitely many $i$ such that $q_i \in F^\mathcal{A}$, or
\item every alphabet symbol that gets put into one of the buffers also gets removed from it eventually, and there are infinitely many $i$ such 
      that $p_i \in F^{\mathcal{B}}$.
\end{itemize}
\end{itemize} 
Otherwise \spoiler wins.
\end{definition}

\begin{definition}\rm
  Given $k_1,k_2$ as above, we say that $\mathcal{B}$
  \emph{$(k_1,k_2)$-simulates} $\mathcal{A}$, written
  $\mathcal{A} \simu{k_1,k_2}{} \mathcal{B}$ if player \duplicator has
  a winning strategy for the game
  $\simugame{k_1,k_2}{\mathcal{A}}{\mathcal{B}}$.
\end{definition}

\begin{example}
\label{ex:multibuff}
Consider the following two NBA $\mathcal{A}$ (left) and $\mathcal{B}$
(right) over the alphabet $\Sigma = \{ a,b,c,d\}$ with $\sigma(a)=1$, $\sigma(b)=\sigma(c)=\sigma(d)=2$. 
\begin{center}
\begin{tikzpicture}[initial text={}, node distance=12mm, every state/.style={minimum size=5mm}, thick]
  \node[state,initial]   (qa1)       {};
  \node[state]           (qa2) [right of=qa1]       {};
  \node[state,accepting] (qa3) [right of=qa2]       {};
  
  \path[->] (qa1) edge              node [above]      {$b$}        (qa2)
            (qa1) edge [loop above] node [above]      {$a$}        (qa2)
            (qa2) edge              node [above]      {$c,d$}        (qa3)
            (qa3) edge [loop above] node              {$a$}        (qa3);
            
  \node[state,initial]   (qb0) [right of=qa1, node distance = 7cm] {};
  \node[state]           (qb1) [above right of=qb0]                      {};
  \node[state]           (qb2) [below right of=qb0]                      {};
  \node[state,accepting] (qb3) [right of=qb1]                      {};
  \node[state,accepting] (qb4) [right of=qb2]                      {};
  
  \path[->] (qb0) edge              node [above]      {$b$}        (qb1)
            (qb0) edge              node [above]      {$b$}        (qb2)
            (qb1) edge              node [above]      {$c$}        (qb3)
            (qb2) edge              node [above]      {$d$}        (qb4)
            (qb4) edge [loop above] node              {$a$}        (qb4)
            (qb3) edge [loop above] node              {$a$}        (qb3);   

\end{tikzpicture}
\end{center}
We have $\mathcal{A} \simu{\omega,1} \mathcal{B}$, since \duplicator
has a winning strategy: skipping her turn until \spoiler reads $c$ or
$d$.  \duplicator then consumes the entire content of the two buffers
and reaches the accepting loop.
\end{example}

We describe a second example in order to explain a possibly apparent difference in the definitions of the single- and the two-buffer games: in 
single-buffer games we let \duplicator consume at most one symbol per round only because allowing her to consume more than one does not make her 
stronger. There are two-buffer games, though, which she can only win when she is allowed to consume more than one symbol at a time. 
In such games at least one buffer must be bounded; if both buffers are unbounded then \duplicator can defer her moves for any finite number of
times at any point and can never be forced by \spoiler to consume a letter in order to not exceed a buffer capacity. Now if there are two 
buffers then in order to consume an element from one that is close to overflow she may have to first consume one from the other buffer, hence
consume more than one letter in a round. 

\begin{example}
Consider the following two NBA $\mathcal{A}$ (left) and $\mathcal{B}$
(right) over the alphabet $\Sigma = \{ a,b\}$ with $\sigma(a)=1$, $\sigma(b)=2$, and the play $\simugame{\omega,1}{\mathcal{A}}{\mathcal{B}}$. 
\begin{center}
\scalebox{0.9}{\begin{tikzpicture}[initial text={}, node distance=12mm, every state/.style={minimum size=4mm,inner sep=1pt}, thick]
  \node[state,initial]   (qa0)                      {$q_0$};
  \node[state]           (qa1) [right of=qa0]       {$q_1$};
  \node[state]           (qa2) [right of=qa1]       {$q_2$};
  \node[state]           (qa3) [right of=qa2]       {$q_3$};
  \node[state,accepting] (qa4) [above right of=qa3] {$q_4$};
  \node[state,accepting] (qa5) [below right of=qa3] {$q_5$};
  
  \path[->] (qa0) edge              node [above]      {$a$} (qa1)
            (qa1) edge              node [above]      {$a$} (qa2)
            (qa2) edge              node [above]      {$b$} (qa3)
            (qa3) edge              node [above left] {$b$} (qa4)
                  edge              node [below left] {$b$} (qa5)
            (qa4) edge [loop right] node [right]      {$a$} ()
            (qa5) edge [loop right] node [right]      {$b$} ();
            
  \node[state,initial]   (qb0) [right of=qa3, node distance = 3cm] {$p_0$};
  \node[state]           (qb1) [above right of=qb0]                {$p_1$};
  \node[state]           (qb2) [right of=qb1]                      {$p_2$};
  \node[state]           (qb3) [right of=qb2]                      {$p_3$};
  \node[state,accepting] (qb4) [right of=qb3]                      {$p_4$};
  \node[state]           (qb5) [below right of=qb0]                {$p_5$};
  \node[state]           (qb6) [right of=qb5]                      {$p_6$};
  \node[state]           (qb7) [right of=qb6]                      {$p_7$};
  \node[state,accepting] (qb8) [right of=qb7]                      {$p_8$};
  
  \path[->] (qb0) edge              node [above left] {$a$}        (qb1)
                  edge              node [below left] {$a$}        (qb5)
            (qb1) edge              node [above]      {$b$}        (qb2)
            (qb2) edge              node [above]      {$a$}        (qb3)
            (qb3) edge              node [above]      {$b$}        (qb4)
            (qb4) edge [loop right] node [right]      {$a$}        ()
            (qb5) edge              node [above]      {$b$}        (qb6)
            (qb6) edge              node [above]      {$a$}        (qb7)
            (qb7) edge              node [above]      {$b$}        (qb8)
            (qb8) edge [loop right] node [right]      {$b$}        ();
\end{tikzpicture}}
\end{center}
\duplicator's winning strategy for the first three rounds is to store the three symbols chosen by \spoiler resulting in buffer contents
$(aa,b)$. Note that then the second buffer is full. Hence, when \spoiler chooses a $b$ now in state $q_3$, \duplicator must consume it in 
this round but in her current state $p_0$ she only has $a$-transitions available. Hence, she must consume the first $a$, advance to $p_1$
or $p_5$ depending on whether \spoiler has moved to $q_4$ or $q_5$. Say it was $q_4$ so \duplicator has moved to $p_1$ and the current
buffer content is $(a,bb)$. She then needs to move over to $p_2$ creating the buffer content $(a,b)$ and can end the round. She could also
consume the next buffered $a$ and move to $p_3$ with buffers $(\epsilon,b)$ but this does not help her, neither in this case nor in general.   
\end{example}

The hierarchy of buffered simulations stated in Proposition~\ref{prop:hierarchy} easily carries over to two-buffer simulations. The
total order $\le$ on $\Nat \cup \{\omega\}$ extends to the usual partial order of point-wise comparisons on pairs from
$\Nat \cup \{\omega\}$. I.e.\ we have $(78,6) \le (\omega,6)$ but $(3,4) \not\le (5,2)$.

\begin{theorem}
\label{thm:hierarchies}
For any $k_1,k_2,\ell_1,\ell_2 \in \Nat \cup \{\omega\}$ with $(k_1,k_2) \le (\ell_1,\ell_2)$ we have
$\simu{k_1,k_2}{} \subseteq \simu{\ell_1,\ell_2}{}$. Moreover, if $(\ell_1,\ell_2) \not\le (k_1,k_2)$ then there are automata $\mathcal{A}$ and $\mathcal{B}$ such that
$\mathcal{A} \notsimu{k_1,k_2}{} \mathcal{B}$ but $\mathcal{A} \simu{\ell_1,\ell_2}{} \mathcal{B}$.
\end{theorem}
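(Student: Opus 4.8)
The plan is to treat the two assertions separately. For the inclusion $\simu{k_1,k_2}{} \subseteq \simu{\ell_1,\ell_2}{}$ under the hypothesis $(k_1,k_2) \le (\ell_1,\ell_2)$, I would argue that any winning strategy of \duplicator transfers verbatim. The decisive observation is that, in a fixed configuration, the legal moves of both players are independent of the buffer capacities: \spoiler only selects a transition of $\mathcal{A}$, and \duplicator may remove any number $0 \le k \le |\beta_1|+|\beta_2|$ of symbols along valid transitions of $\mathcal{B}$. Hence, for every sequence of \spoiler-choices, a strategy $\tau$ that wins $\simugame{k_1,k_2}{\mathcal{A}}{\mathcal{B}}$ produces literally the same sequence of configurations when played in $\simugame{\ell_1,\ell_2}{\mathcal{A}}{\mathcal{B}}$. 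Since that play respects the capacities $(k_1,k_2) \le (\ell_1,\ell_2)$ it a fortiori respects $(\ell_1,\ell_2)$, and the acceptance part of the winning condition never refers to capacities and is therefore unchanged. Thus $\tau$ also wins the larger game, which gives the inclusion.

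For strictness, assume $(\ell_1,\ell_2) \not\le (k_1,k_2)$ and, without loss of generality, $\ell_1 > k_1$ (the case $\ell_2 > k_2$ being symmetric under exchanging the two buffers). Then $k_1 < \ell_1$ forces $k_1 \in \Nat$, so Proposition~\ref{prop:hierarchy}(b) provides single-buffer automata $\mathcal{A}'$, $\mathcal{B}'$ with $\mathcal{A}' \notsimu{k_1} \mathcal{B}'$ and $\mathcal{A}' \simu{k_1+1} \mathcal{B}'$; by single-buffer monotonicity (Proposition~\ref{prop:hierarchy}(b), together with (c) when $\ell_1 = \omega$) the latter lifts to $\mathcal{A}' \simu{\ell_1} \mathcal{B}'$. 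I would then set $\sigma(a) = 1$ for every letter $a$ occurring in $\mathcal{A}'$ and $\mathcal{B}'$. With this assignment the second buffer never receives a symbol and stays empty throughout, so its capacity constraint is vacuous and every two-buffer play on $\mathcal{A}', \mathcal{B}'$ is really a play on buffer $1$ alone.

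The crux is then the equivalence that, for $\sigma \equiv 1$ and all $m,n \in \Nat \cup \{\omega\}$, one has $\mathcal{A}' \simu{m,n}{} \mathcal{B}'$ if and only if $\mathcal{A}' \simu{m} \mathcal{B}'$. Instantiating it at $(m,n) = (k_1,k_2)$ and $(m,n) = (\ell_1,\ell_2)$ immediately yields $\mathcal{A}' \notsimu{k_1,k_2}{} \mathcal{B}'$ and $\mathcal{A}' \simu{\ell_1,\ell_2}{} \mathcal{B}'$, completing the argument. I expect this equivalence to be the only genuine obstacle, since the single- and two-buffer games are phrased differently and two discrepancies must be reconciled. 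First, the two-buffer game lets \duplicator consume several symbols in a round, whereas the single-buffer game removes one; but on a single buffer this is exactly the harmless multiple-consumption variant discussed earlier in the text, which grants her no extra power. Second, the acceptance conditions read differently, namely ``\duplicator moves infinitely often'' versus ``every buffered symbol is eventually removed.'' For one FIFO buffer these coincide: consumptions occur in order of insertion, so infinitely many of them drain every symbol, while conversely, since \spoiler feeds one symbol into buffer $1$ each round, emptying the buffer in the limit forces infinitely many consumptions. Checking this equivalence carefully, and inspecting the boundary cases such as $n = 0$ or an unbounded buffer, is the only delicate point; the rest is bookkeeping.
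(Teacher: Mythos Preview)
Your argument for the inclusion $\simu{k_1,k_2}\subseteq\simu{\ell_1,\ell_2}$ is correct and coincides with the paper's: a winning strategy for \duplicator in the smaller game is literally a winning strategy in the larger one, since the moves available in any configuration do not depend on the capacities and the capacity constraint only becomes weaker.

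For strictness you take a genuinely different route. The paper gives an explicit pair of automata using letters from \emph{both} $\Sigma_1$ and $\Sigma_2$: $\mathcal{A}$ loops through $k_1{+}1$ letters of $\Sigma_1$ followed by one of $\Sigma_2$, while $\mathcal{B}$ loops through one letter of $\Sigma_2$ followed by $k_1{+}1$ letters of $\Sigma_1$. \duplicator must therefore let buffer~$1$ fill up to $k_1{+}1$ before she can move at all, so she wins whenever $\ell_1\ge k_1{+}1$ and loses with capacity $k_1$, irrespective of $k_2$ and $\ell_2$. Your approach instead black-boxes Proposition~\ref{prop:hierarchy}(b) and (c) to obtain a separating pair for the single-buffer hierarchy, sets $\sigma\equiv 1$, and argues that with the second buffer inert the two-buffer game collapses to the single-buffer one. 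This is correct: the two discrepancies you isolate (multi-consumption per round, and ``moves infinitely often'' versus ``every buffered symbol is eventually removed'') are exactly the points that need checking, and both are covered by the paper's own remark preceding Proposition~\ref{prop:hierarchy} together with the FIFO observation you give. Your reduction is more modular and makes the irrelevance of $k_2,\ell_2$ manifest; the paper's construction is more self-contained and does not lean on the informal remark that multi-consumption is harmless.
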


\begin{proof}
We immediately get $\simu{k_1,k_2}{} \subseteq \simu{\ell_1,\ell_2}{}$
for $(k_1,k_2) \le (\ell_1,\ell_2)$ since any winning strategy for \duplicator in $\simugame{k_1,k_2}{\mathcal{A}}{\mathcal{B}}$
is also a winning strategy for her in $\simugame{\ell_1,\ell_2}{\mathcal{A}}{\mathcal{B}}$. Note that she is not required to use the additional
buffer capacities.

For the strictness part 
suppose w.l.o.g.\ that $\ell_1 > k_1$ which implies
$k_1 \in \Nat$. Then
 consider the two NBA $\mathcal{A}$ (left) and $\mathcal{B}$ (right) over $\Sigma$ as follows, and let
 $\Sigma_1= \{a \in \Sigma \mid \sigma(a) = 1\}$.
\begin{center}
\begin{tikzpicture}[initial text={}, node distance=10mm, every state/.style={minimum size=4mm}, thick]
  \node[state,initial,accepting]   (qa0)                      {};
  \node[state]                     (qa1) [right of=qa0]       {};  
  \node        		           (dot) [right of=qa1] 	    {$\cdots$};
  \node[state]                     (qa2) [right of=dot]       {};

  \path[->] (qa0) edge                 node [below]      {$\Sigma_1$}                       (qa1)
	    (qa1) edge 		       node [below]      {$\Sigma_1$} 	                    (dot)
	    (dot) edge                 node [below]      {$\Sigma_1$}                       (qa2)
	    (qa2) edge [bend right] node [above]      {$\Sigma_2$}     (qa0);
	    
  \node[state,initial,accepting]   (qb0)  [right of=qa2, node distance=3cm]   {};
  \node[state]                     (qb1)  [right of=qb0]                      {};  
  \node        		           (dot2) [right of=qb1]                      {$\cdots$};
  \node[state]                     (qb2)  [right of=dot2]                     {};

  \path[->] (qb1) edge                node [below]                        {$\Sigma_1$}     (qb0)
	    (dot2) edge               node [below]                        {$\Sigma_1$}     (qb1)
	    (qb2) edge                node [below]      {$\Sigma_1$}     (dot2)
	    (qb0) edge [bend left] node [above]      {$\Sigma_2$}     (qb2);
	    
\draw [decorate,decoration={brace,amplitude=5pt,mirror,raise=3mm}] 
(qa0.south east) -- (qa2.south west) node [midway, below=4mm] {${k_1}+1$};

\draw [decorate,decoration={brace,amplitude=5pt,mirror,raise=3mm}] 
(qb0.south east) -- (qb2.south west) node [midway, below=4mm] {$k_1+1$};
\end{tikzpicture}
\end{center}
It should be clear that we have $\mathcal{A} \simu{\ell_1,\ell_2} \mathcal{B}$ but
$\mathcal{A} \notsimu{k_1,k_2} \mathcal{B}$. 
\end{proof}

\subsection{Reductions to Ordinary Simulation}

The following theorem shows that bounded buffers can be eliminated at the cost of a blow-up in \duplicator's state space. In fact, the buffer can 
be incorporated
into \duplicator's automaton, resulting in an effective capacity of $0$ for this buffer. The structure needs to be defined such that \duplicator
can react immediately to any alphabet symbol that would have been put into that buffer, rather than consuming the older buffer content first and
thus advancing in the automaton to a different state.  

\begin{theorem}
  \label{thm:reduction}
  Let $k_1 \in \Nat \cup \{\omega\}$ and $k_2 \in \Nat$. For any
  automaton $\mathcal{B}$ of size $n$ there is an automaton $\mathcal{B}'$ of
  size $\leq 2n\cdot (|\Sigma|^{k_2+1}-1)$ such that for any automaton
  $\mathcal{A}$ we have that
  $\mathcal{A} \simu{k_1,k_2}{} \mathcal{B}$ iff
  $\mathcal{A} \simu{k_1,0} \mathcal{B}'$.
\end{theorem}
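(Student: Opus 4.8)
The plan is to fold the bounded second buffer into \duplicator's control, so that its content becomes part of the state of $\mathcal{B}'$ and \duplicator is forced to react to every symbol destined for buffer $2$ in the very round in which \spoiler produces it. I would take as states of $\mathcal{B}'$ the triples $(p,w,f)$ with $p\in Q^{\mathcal{B}}$, where $w\in\Sigma^{\le k_2}$ records the current content of the second buffer and $f\in\{0,1\}$ is a flag explained below; the initial state is $(p_I,\epsilon,0)$. Each transition of $\mathcal{B}'$ encodes one \duplicator move of the reduced game together with a block of buffer-$2$ consumptions. For a symbol $c$ with $\sigma(c)=1$ I put a transition $\Transition{(p,w,f)}{c}{(p',w',f')}$ whenever, starting in $p$, \duplicator can first make a (possibly empty) block of $\mathcal{B}$-moves consuming the oldest symbols of $w$ and then a $\mathcal{B}$-move on $c$ into $p'$, where $w'$ is the corresponding suffix-trimmed remainder of $w$. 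For $\sigma(c)=2$ the transition instead prepends $c$ to $w$ and then consumes a block of its oldest symbols, reaching $p'$ and leaving some $w'$ of length $\le k_2$; this is exactly how \duplicator answers a buffer-$2$ symbol without ever storing it in the capacity-$0$ buffer of the reduced game.

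The flag $f$ implements the standard generalised-B\"uchi-to-B\"uchi counter for the two ``infinitely often'' requirements that an accepting \duplicator run must meet simultaneously: (i) a state of $F^{\mathcal{B}}$ is visited, possibly inside one of the consumption blocks above, and (ii) the transition removes at least one symbol from $w$ or leaves $w$ empty. The accepting states of $\mathcal{B}'$ are those at which this counter wraps around. The point of (ii) is that, by FIFO discipline, the two-buffer clause ``every symbol put into buffer $2$ is eventually removed'' fails precisely when from some point on $w$ stays non-empty and never shrinks, i.e.\ precisely when (ii) holds only finitely often; folding (ii) into the B\"uchi condition therefore lets the fixed winning condition of the reduced $(k_1,0)$-game speak about the simulated buffer $2$, which as an actual buffer is always empty in the reduced game. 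Counting states gives $|Q^{\mathcal{B}'}|\le 2n\sum_{i=0}^{k_2}|\Sigma|^i\le 2n(|\Sigma|^{k_2+1}-1)$ for $|\Sigma|\ge 2$, as required.

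For the equivalence I would translate winning strategies in both directions while maintaining the invariant that the $\mathcal{B}'$-state $(p,w,f)$ reached after a round carries exactly \duplicator's $\mathcal{B}$-state $p$ and the content $w$ of buffer $2$ in the corresponding play of the other game; buffer $1$ and its capacity $k_1$ are untouched and transfer verbatim, as does \spoiler's side on $\mathcal{A}$. Given a winning strategy in $\simugame{k_1,k_2}{\mathcal{A}}{\mathcal{B}}$, in each round \duplicator reorganises the prescribed block of buffer-$1$ and buffer-$2$ consumptions into a sequence of single-symbol $\mathcal{B}'$-transitions, attaching the buffer-$2$ consumptions to the neighbouring buffer-$1$ move, or to the mandatory reaction to the freshly produced $\sigma=2$ symbol, as the flush block of that transition; the reverse translation simply expands each $\mathcal{B}'$-transition back into its block of $\mathcal{B}$-moves and buffer operations.

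The main obstacle I expect is to show that these reorganisations preserve the winning conditions exactly. Two points need care: first, that the mandatory immediate reaction to each $\sigma=2$ symbol, dictated by the capacity-$0$ second buffer of the reduced game, is always realisable by a store-and-flush transition of $\mathcal{B}'$ and never forces \duplicator to overflow, which reduces to the fact that the simulated $w$ never exceeds $k_2$; and second, that the generalised-B\"uchi flag faithfully records the $F^{\mathcal{B}}$-visits and the buffer-$2$ removals happening inside the consumption blocks, so that infinitely many accepting $\mathcal{B}'$-states along a play correspond exactly to the conjunction of \duplicator's original B\"uchi condition with the eventual emptying requirement on buffer $2$. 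Once these two correspondences are established, both implications of the stated equivalence follow.
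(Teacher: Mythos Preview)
Your construction is workable but takes a heavier route than the paper's. The paper does \emph{not} bundle blocks of $\mathcal{B}$-moves into single $\mathcal{B}'$-transitions; instead each $\mathcal{B}'$-transition corresponds to at most one $\mathcal{B}$-step: on $a$ with $\sigma(a)=1$ it is simply the $\mathcal{B}$-move on $a$ (with $w$ and the flag unchanged), and on $a$ with $\sigma(a)=2$ it either just stores $a$ (flag set to~$1$) or stores $a$ and consumes the single oldest element of $aw$ via one $\mathcal{B}$-move (flag set to~$0$). Since no $\mathcal{B}$-moves are hidden inside a transition, no generalised-B\"uchi counter is needed; the flag merely records whether the last buffer-$2$ action was a pure store, and the accepting set is simply $F\times\Delta^{\le k_2}\times\{0\}$. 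In the paper's construction the stored word $w$ is in fact monotone---it never shrinks---and the strategy translation relies on look-ahead monotonicity (``\duplicator may as well keep the simulated buffer full and only act when $|w|=k_2$'') rather than on a round-by-round matching of configurations. Your approach, by contrast, maintains an exact correspondence between $(p,w)$ and the original configuration after every round; this is conceptually tidy but is precisely what forces you to track $F^{\mathcal B}$-visits occurring inside blocks and the buffer-$2$ emptying condition via the counter. One small gap in your reorganisation argument: in a round where \spoiler plays a $\Sigma_1$-letter and the original strategy consumes only from buffer~$2$, there is no ``neighbouring buffer-$1$ move'' and no fresh $\sigma=2$ symbol in that round to attach the flush to, so the flush must be deferred to a later round; this is harmless for the winning condition but should be argued explicitly.
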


\begin{proof}
Intuitively, we save the content of the buffers in the state space of $\mathcal{B}'$.
Formally, let $\Delta :=\{a \in \Sigma \mid \sigma(a) = 2\}$ be the set of all letters that get stored in the buffer
of capacity $k_2$ which is to be reduced down to size $0$.
We write $\Delta^{\leq k}$ to denote $\{\epsilon \} \cup \Delta^1 \cup \ldots \cup \Delta^k$.
Let $\mathcal{B} = (Q,\Sigma,q_I,\delta,F)$. 
Then define $\mathcal{B}' := (Q \times \Delta^{\le k_2} \times \{0,1\}, \Sigma, (q_I,\epsilon,0), \delta', F \times \Delta^{\le k_2} \times \{0\})$
with 
\begin{align*}
\delta'((q,w,d),a) = \begin{cases}
\{ (q,aw,1) \} \cup \{ (p,v,0) \mid aw = vb, p \in \delta(q,b) \} &, \text{ if } a \in \Delta \text{ and } |w| < k_2, \\
\{ (p,av,0) \mid w = vb, p \in \delta(q,b) \} &, \text{ if } a \in \Delta \text{ and } |w|=k_2, \\
\{ (p,w,d) \mid p \in \delta(q,a) \} &, \text{ if } a \in \Sigma \setminus \Delta.
\end{cases}
\end{align*}
The third component in the states of $\mathcal{B}'$ is used to
indicate whether or not something has been put into the buffer.  It is
not difficult to transform \duplicator's winning strategies between
the games $\simugame{k_1,k_2}{\mathcal{A}}{\mathcal{B}}$ and
$\simugame{k_1,0}{\mathcal{A}}{\mathcal{B}'}$. Suppose
$\mathcal A\simu{k_1,k_2}\mathcal B$. Then, by \cite{GurH82},
\duplicator has a positional winning strategy $\sigma$ in the game
$\simugame{k_1,k_2}{\mathcal{A}}{\mathcal{B}}$.  Then we can define a
positional winning strategy $\sigma'$ for her in the latter via
$\sigma'(q,(\beta_1,\epsilon),(p,w,d)) := \sigma(q,(\beta_1,w),p)$ in
case of $|w|=k_2$. If $|w| < k_2$ then she simply follows the
deterministic transitions in $\mathcal{B}'$ which amount to waiting
for the buffer to be filled. The transformation of positional winning
strategies in the other direction works in the same way. Note that
there is no choice for her with respect to the value of $d$; its value
is determined by the value of the other components and the history of
a play.
\end{proof}

Clearly, the same argument can be used to reduce $(k_1,k_2)$-simulation to $(0,k_2)$-simulation when $k_1 \in \Nat$. 

The following result should be obvious given that two buffers of capacity $0$ do not introduce any partial commutativity between input symbols
since they always have to be consumed by \duplicator immediately after \spoiler has produced them. Hence, the order of consumption remains the
same as the order in which they are produced.

\begin{lemma}
\label{lem:simu0020}
$\simu{0,0} = \simu{0}$.
\end{lemma}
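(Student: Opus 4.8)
The plan is to show both inclusions by relating the two games move-for-move. The statement $\simu{0,0} = \simu{0}$ asserts that two-buffer simulation with both buffers of capacity $0$ coincides with ordinary single-buffer simulation of capacity $0$, which by Proposition~\ref{prop:hierarchy}\ref{prop:hierarchy:ordsimu} is just ordinary fair simulation. The key observation is that when both buffers have capacity $0$, \duplicator's winning condition forces every buffer to be empty after each of her moves, so in each round the single symbol $a$ that \spoiler places into buffer $\sigma(a)$ must be removed in that same round. Thus the partition $\sigma$ becomes irrelevant: regardless of which buffer a symbol is routed to, it is immediately consumed, and the dynamics reduce to \duplicator matching each \spoiler symbol with a single transition of the same label.

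First I would make the correspondence of configurations precise. In any play of $\simugame{0,0}{\mathcal{A}}{\mathcal{B}}$ that \duplicator does not immediately lose, every configuration $(q,\beta_1,\beta_2,p)$ reached after a completed round satisfies $\beta_1 = \beta_2 = \epsilon$, because any non-empty buffer would violate $|\beta^i_j| \le k_j = 0$. Hence such a configuration is determined by the pair $(q,p)$, which I identify with the capacity-$0$ single-buffer configuration $(q,\epsilon,p)$. Within a single round, after \spoiler plays $\Transition{q}{a}{q'}$ the symbol $a$ sits in buffer $\sigma(a)$; since that buffer cannot retain it, \duplicator's only non-losing option is to perform exactly one consumption step, choosing $j = \sigma(a)$ and a $p'$ with $\Transition{p}{a}{p'}$ (she may not skip, as skipping would leave the buffer non-empty, and she cannot consume twice, as the other buffer is empty). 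This is exactly the set of legal responses available to \duplicator in $\simugame{0}{\mathcal{A}}{\mathcal{B}}$ at configuration $(q,\epsilon,p)$ after \spoiler plays $a$.

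Given this bijection between legal rounds, I would translate strategies in both directions. A strategy for \duplicator in one game induces a strategy in the other by mapping configurations via $(q,\epsilon,\epsilon,p) \leftrightarrow (q,\epsilon,p)$ and copying her choice of transition (the buffer index $j$ in the two-buffer game is forced to be $\sigma(a)$ and carries no extra information). Finally I would check that the winning conditions agree under this correspondence: in both games the capacity constraints are either trivially met or immediately lost; the ``finitely many accepting \spoiler states'' clause is identical; and since \duplicator consumes exactly one symbol in every round, she moves infinitely often and every buffered symbol is removed immediately, so the ``infinitely many accepting \duplicator states'' clause coincides with the corresponding clause for $\simu{0}$. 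Therefore a play is won by \duplicator in $\simugame{0,0}{\mathcal{A}}{\mathcal{B}}$ iff the corresponding play is won by her in $\simugame{0}{\mathcal{A}}{\mathcal{B}}$, giving $\simu{0,0} = \simu{0}$.

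The only mild subtlety, rather than a genuine obstacle, is arguing carefully that \duplicator is \emph{forced} to consume exactly the freshly produced symbol each round: she cannot skip without overflowing a capacity-$0$ buffer, and she cannot draw from the other buffer because both buffers are empty at the start of the round. Once this forcing is established the identification of legal moves is immediate and the rest is routine bookkeeping; I expect no real difficulty beyond being explicit that skipping is losing precisely because the capacity bound is checked after \duplicator's moves.
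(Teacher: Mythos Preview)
Your proposal is correct and follows exactly the reasoning the paper gives: the paper does not write out a formal proof at all, but simply remarks that with two buffers of capacity~$0$ every symbol must be consumed immediately, so no reordering is possible and the game degenerates to ordinary fair simulation. Your argument is a careful, fleshed-out version of precisely this observation, with the same identification of configurations and the same forcing argument that \duplicator must consume the freshly produced symbol in each round.
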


\subsection{Decidability and Complexity}

By applying Theorem~\ref{thm:reduction} twice we can transform the
input automata for a two-buffer simulation game with bounded buffers
into a pair of automata on which we need to check the relation
$\simu{0,0}{}$ which, by Lemma~\ref{lem:simu0020} and
Proposition~\ref{prop:hierarchy} (\ref{prop:hierarchy:ordsimu} is the
same as ordinary fair simulation. It is known \cite{EtessamiWS01} that
fair simulation games can be solved in polynomial time because they
are special cases of parity games of index 3, and parity games of
fixed index can be solved in polynomial time
\cite{EL86,Jurdzinski/00}.

\begin{corollary}
\label{cor:decidable_ptime}
For every fixed $k_1,k_2 \in \Nat$ , the relation $\simu{k_1,k_2}$ is decidable in polynomial time.
\end{corollary}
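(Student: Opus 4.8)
The plan is to collapse both bounded buffers one after the other using the construction of Theorem~\ref{thm:reduction}, ending at the relation $\simu{0,0}$, and then to use the fact that $\simu{0,0}$ coincides with ordinary fair simulation, which is solvable in polynomial time. Because $k_1$ and $k_2$ are held fixed, the only thing I will need to check with some care is that the state-space blow-up incurred by each elimination step remains polynomial in the size of the input automata.

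First I would apply Theorem~\ref{thm:reduction} with the given $k_1 \in \Nat$ and $k_2 \in \Nat$ to replace $\mathcal{B}$ by an automaton $\mathcal{B}'$ of size at most $2n\cdot(|\Sigma|^{k_2+1}-1)$ with the property that $\mathcal{A} \simu{k_1,k_2} \mathcal{B}$ iff $\mathcal{A} \simu{k_1,0} \mathcal{B}'$; this removes the second buffer by storing its contents in the states of $\mathcal{B}'$. Then I would run the symmetric instance of the same construction --- noted in the remark right after Theorem~\ref{thm:reduction} and applicable since $k_1 \in \Nat$ --- to eliminate the first buffer as well, obtaining $\mathcal{B}''$ with $\mathcal{A} \simu{k_1,0} \mathcal{B}'$ iff $\mathcal{A} \simu{0,0} \mathcal{B}''$. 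These two passes do not interfere: the map $\sigma$ partitions $\Sigma$ into the letters routed to buffer~$1$ and those routed to buffer~$2$, so the first elimination only rewrites transitions on buffer-$2$ letters while the second only touches buffer-$1$ letters. Finally, Lemma~\ref{lem:simu0020} together with Proposition~\ref{prop:hierarchy} identifies $\simu{0,0}$ with $\simu{0}$, i.e.\ with ordinary fair simulation $\simu{}$, which is a parity game of index~$3$ and hence decidable in polynomial time by~\cite{EtessamiWS01,EL86,Jurdzinski/00}.

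The quantitative point to watch is the size bookkeeping. Each elimination multiplies the number of states by a factor of the form $2\cdot(|\Sigma|^{k+1}-1)$. Although $|\Sigma|$ is part of the input rather than a constant, the exponent $k{+}1$ is fixed once $k_1$ and $k_2$ are fixed, so each factor is a polynomial in $|\Sigma|$ of fixed degree; composing the two bounds, $\mathcal{B}''$ has size polynomial in $n$ and $|\Sigma|$, hence polynomial in the input. Solving the fair-simulation game on $\mathcal{A}$ and $\mathcal{B}''$ in polynomial time then decides $\mathcal{A} \simu{k_1,k_2} \mathcal{B}$ in polynomial time.

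I do not expect a genuine obstacle, since the difficult content already lives in Theorem~\ref{thm:reduction} and in the polynomial-time solvability of fair simulation. The only facts that must be verified explicitly are that the buffer-elimination construction is truly symmetric in the two buffer indices --- so that a second pass really collapses the remaining bounded buffer without disturbing the transitions produced by the first pass --- and that the exponents in the blow-up stay pinned to the fixed constants $k_1,k_2$ and never to the input size.
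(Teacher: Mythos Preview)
Your proposal is correct and follows essentially the same route as the paper: apply Theorem~\ref{thm:reduction} twice to reduce to $\simu{0,0}$, invoke Lemma~\ref{lem:simu0020} and Proposition~\ref{prop:hierarchy}\ref{prop:hierarchy:ordsimu} to identify this with fair simulation, and then appeal to the polynomial-time solvability of index-$3$ parity games. Your additional remarks on the size bookkeeping and on the independence of the two elimination passes are accurate and only make the argument more explicit than the paper's own presentation.
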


For unbounded buffers the situation is different. Decision problems involving two unbounded buffers typically become undecidable as stated in
the introduction. Here we adapt the argument used for the reachability problem for communicating 
finite state machines~\cite{BrandZ83,ChambartS08}, and show the
undecidability of $\simu{\omega,\omega}{}$ 
by a reduction from Post's Correspondence Problem~\cite{post46}.
The reduction basically constructs a pair of automata, such that
to win the game \spoiler is required to produce a possible solution for the PCP
and store it in the two unbounded buffers.
\duplicator's role is to check whether this is indeed a correct solution. 

\begin{theorem}
\label{undecidable}
The relation $\simu{\omega,\omega}{}$ is undecidable.
\end{theorem}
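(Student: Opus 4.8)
The plan is to reduce from Post's Correspondence Problem (PCP), exploiting the partial commutativity provided by the two buffers to let \duplicator compare two words symbol by symbol. Fix a PCP instance, i.e.\ pairs $(u_1,v_1),\dots,(u_n,v_n)$ over an alphabet $\Gamma$, asking whether there is a nonempty index sequence $i_1\cdots i_m$ with $u_{i_1}\cdots u_{i_m}=v_{i_1}\cdots v_{i_m}$. I take two disjoint tagged copies $\Gamma_1,\Gamma_2$ of $\Gamma$ together with end markers $\$_1,\$_2$ and a dummy letter $d$, and set $\sigma$ so that $\Gamma_1\cup\{\$_1,d\}$ is routed to buffer $1$ and $\Gamma_2\cup\{\$_2\}$ to buffer $2$. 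The automaton $\mathcal{A}$ is built so that \spoiler has essentially no freedom except the choice of indices: for each $i$ there is a fixed path spelling $u_i$ in $\Gamma_1$ followed by $v_i$ in $\Gamma_2$ and returning to a central choice state; from there \spoiler may instead \emph{commit}, emitting $\$_1$ and $\$_2$ and entering a loop that outputs $d$ forever. Only the states of this final loop are accepting, and the choice state can be left only after at least one index has been produced, so that \spoiler's run is accepting exactly when he commits after producing a genuine, nonempty candidate $u_{i_1}\cdots u_{i_m}$ in buffer $1$ and $v_{i_1}\cdots v_{i_m}$ in buffer $2$. I will show $\mathcal{A}\notsimu{\omega,\omega}{}\mathcal{B}$ iff the instance has a solution.

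The automaton $\mathcal{B}$ implements a comparison gadget whose transition labels force \duplicator's order of consumption: from the comparison state $C$ only edges labelled in $\Gamma_1\cup\{\$_1\}$ leave, so by the routing $\sigma$ she can consume only from the front of buffer $1$; reading a tagged letter $x$ leads to a state $C_x$ from which only $\Gamma_2\cup\{\$_2\}$-edges leave, forcing her to consume the front of buffer $2$ next. From $C_x$, reading the matching letter returns to $C$, whereas reading any other tagged letter, or reading $\$_2$, leads to an accepting \emph{sink} carrying self-loops on all of $\Sigma$. Reading $\$_1$ in $C$ leads to a ``top-exhausted'' state that goes to the sink on any $\Gamma_2$-letter but has \emph{no} transition on $\$_2$ (and none on $d$). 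Thus a mismatch of letters, or a length difference in either direction, sends \duplicator to the all-consuming accepting sink, whereas a genuine correspondence drives both buffers down to their markers and leaves her with no available move while the dummy letters keep piling up in buffer $1$.

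For correctness I argue as follows. Since the buffers are unbounded, \duplicator may skip until \spoiler commits, at which point the entire candidate is visible in the buffers. If \spoiler never commits, his run is not accepting and \duplicator wins outright. If he commits with a candidate that is \emph{not} a solution, then either some position differs or the two words have different lengths; in both cases the gadget lets her reach the sink, after which she flushes the finite backlog and thereafter keeps pace with the $d$-loop, so every buffered symbol is eventually removed and an accepting state is seen infinitely often, and \duplicator wins. If the candidate \emph{is} a solution, the forced comparison consumes both words exactly, she arrives with both markers at the fronts and no legal move, the dummies in buffer $1$ are never removed, and \spoiler wins. Hence \spoiler wins iff the PCP instance is solvable, and $\simu{\omega,\omega}{}$ decides the (undecidable) complement of PCP, so it is undecidable.

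The main obstacle is pinning down this exact-match-versus-mismatch dichotomy inside the gadget: simultaneously guaranteeing that a genuine solution provably \emph{traps} \duplicator (she can neither empty buffer $1$ nor revisit an accepting state) while every failure of the correspondence --- including a pure length difference, which is exactly why the end markers $\$_1,\$_2$ are needed --- gives her a detectable escape to the flushing sink, and arguing that neither player can cheat out of these forced moves. It is equally essential that both buffers are unbounded: this is what lets \spoiler build candidates of unbounded length and, crucially, what supplies the partial commutativity by which \duplicator consumes buffer $1$ and buffer $2$ alternately and so compares two separate words position by position --- neither being available with a single buffer or with bounded capacities, in contrast to Corollary~\ref{cor:decidable_ptime}.
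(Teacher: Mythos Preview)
Your reduction is correct and follows essentially the same approach as the paper: reduce from PCP, let \spoiler produce a candidate solution with the $u$-word routed to one buffer and the $v$-word to the other, and make \duplicator's automaton a mismatch detector that reaches an accepting sink precisely when the two buffered words differ. The only differences are cosmetic: the paper uses a nondeterministic comparison gadget over $\{0,1,\sharp,\bar 0,\bar 1,\bar\sharp\}$ with the infinite tail $(\sharp\bar\sharp)^\omega$ serving simultaneously as end marker and padding, whereas you use a deterministic gadget with explicit separate end markers $\$_1,\$_2$ and a dedicated dummy letter~$d$; both realisations yield the same dichotomy and the same conclusion.
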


\begin{figure}[t]
\centering
\begin{tikzpicture}[initial text={}, node distance=12mm, every state/.style={minimum size=5mm}, thick]
  \node[state,initial]   (qa0)                                    {};
  \node                  (qa1)  [right of=qa0]                    {$\vdots$};
  \node[state]           (aux1) [right of=qa1]                    {};
  \node[state]           (qa2)  [above of=qa1, node distance=8mm] {};
  \node[state]           (qa3)  [below of=qa1, node distance=8mm] {};
  \node[state,accepting] (aux2) [right of=aux1]                   {};
  \node                  (A)    [above of=qa0]                    {$\mathcal{A}_L$};          

  \path[->] (qa0)  edge                    node [above left]      {$u_1$}       (qa2)
                   edge                    node [below left]      {$u_n$}       (qa3)
            (aux1) edge                    node [above]      {$\sharp \bar{\sharp}$}    (aux2)
            (aux2) edge [loop above]       node [above]      {$\sharp \bar{\sharp}$}    (aux2)
            (qa2)  edge                    node [above]      {$\bar{v}_1$} (aux1)
            (qa3)  edge                    node [below]      {$\bar{v}_n$} (aux1)      
            (aux1) edge [bend right = 60]  node [above right]      {$u_1$}       (qa2)
                   edge [bend left = 60]   node [below right]      {$u_n$}       (qa3);
            
  \node[state,initial]   (qb0) [right of=qa1, node distance=5cm]  {};
  \node[state]           (aux2)[right of=qb0, node distance=17mm] {};
  \node[state]           (qb1) [above of=aux2]                    {};
  \node[state]           (qb2) [above of=qb0]                     {};
  \node[state]           (qb3) [below of=aux2]                    {};
  \node[state]           (qb4) [below of=qb0]                     {};
  \node[state,accepting] (qb5) [right of=aux2, node distance=17mm]{};
  \node[]                (B)   [left of=qb2]                      {$\mathcal{B}_L$}; 

  \path[->] (qb0) edge [bend left]             node [left]       {$0$}               (qb2)
                  edge [bend right]            node [left]       {$1$}               (qb4)
            (qb1) edge                         node [above]      {$\bar{1},\bar{\sharp}$}  (qb5)
            (qb2) edge [bend left]             node [right]      {$\bar{0}$}         (qb0)
            (qb3) edge                         node [below]      {$\bar{0},\bar{1}$} (qb5)
            (aux2) edge                        node [above left] {$\bar{0},\bar{\sharp}$}  (qb5)
            (qb5) edge [loop above]            node [above]      {$\Sigma$}          (qb5)
            (qb4) edge [bend right]            node [right]      {$\bar{1}$}         (qb0)
            (qb0) edge                         node [above]      {$0$}               (qb1)
                  edge                         node [above right]{$1$}               (aux2)
                  edge                         node [below]      {$\sharp$}          (qb3);

\end{tikzpicture}
\caption{Automata $\mathcal{A}_L$ and $\mathcal{B}_L$ 
used in the proof of Theorem~\ref{undecidable}.}
\label{fig:autundecidable}
\end{figure}
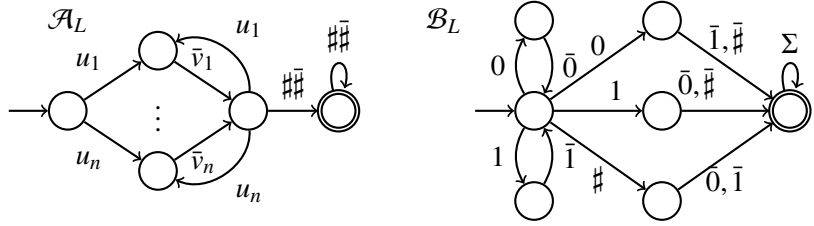

\begin{proof}
Let $L = \{(u_1,v_1), \ldots, (u_n,v_n)\}$ be an input for the PCP where $u_i$ and $v_i$ 
are non-empty finite words over $\{0,1\}$. We construct two automata $\mathcal{A}_L$
and $\mathcal{B}_L$ over $\Sigma = \{0,1,\sharp,\bar{0},\bar{1},\bar{\sharp}\}$.
We write $\bar{w}$ to denote $\bar{a}_1\ldots \bar{a}_m \in \{ \bar{0},\bar{1},\bar{\sharp} \}^*$ for $w = a_1\ldots a_m$.
We define $\sigma : \Sigma \rightarrow \{1,2\}$, where $\sigma(x) = 1$ for 
$x \in \{0,1,\sharp\}$, and $\sigma(x) = 2$ otherwise. Let $\mathcal{A}_L$
and $\mathcal{B}_L$ be the automata depicted in Figure~\ref{fig:autundecidable}.
We abbreviate a sequence of transitions with letters $a_1,\ldots,a_k$ in $\mathcal{A}_L$ with a single arrow.

An infinite word is accepted by $\mathcal{A}_L$ iff it is of the form 
$u_1\bar{v}_1\ldots u_n\bar{v}_n(\sharp \bar{\sharp})^\omega$. 
An infinite word is accepted by $\mathcal{B}_L$ 
if it starts with balanced pairs $(a\bar{a})^*$, $a\in \{0,1\}$, and
at one point contains an unbalanced pair $x\bar{y}$ with $x,y \in \{0,1,\sharp \}$ and $x \neq y$. 

We claim that
$\mathcal{A}_L \notsimu{\omega,\omega}{} \mathcal{B}_L$ iff there is
a solution for $L$. Suppose such a solution $i_1,\ldots,i_m$
exists. Then Spoiler wins by producing an accepting word
$u_{i_1}\bar{v}_{i_1}\ldots
u_{i_m}\bar{v}_{i_m}(\sharp\bar{\sharp})^\omega$. By
Lemma~\ref{lem:relationship}, \duplicator has to find a run on some word of the
form $x_1\bar{y_1} x_2\bar{y_2} x_3\bar{y_3}\dots$ with
$x_1x_2x_3\dots = u_{i_1} u_{i_2}\ldots
u_{i_m}\#^\omega=v_{i_1}v_{i_2}\ldots v_{i_m}\#^\omega
=y_1y_2y_3\dots$, i.e., without a mismatch. Hence she cannot win the
play.

On the other hand, if there is no solution for $L$, then no matter which path \spoiler chooses in his automaton it will
either not reach the accepting loop or it will have to contain a mismatch in the sense that the concatenation of the
$u$-parts and the concatenation of the $v$-parts differ in one digit at some position. \duplicator's winning strategy
consists of skipping her turn until such a mismatch is being produced which enables her to go to the accepting loop. 
Otherwise she waits forever but \spoiler does not produce an accepting run so she also wins such plays.
\end{proof}

\subsection{The Relationship to Language-Like Inclusion Problems}
\label{sub:lang_incl_prob}

Note that so far we have considered automata as finite-state devices with no particular semantics other than that provided by the
two-buffer games. Clearly, these automata are B\"uchi automata but we have avoided this analogy because two-buffer games, unlike
single-buffer games, lead away from the problem of language inclusion. We end this section on two-buffer games with a lemma that 
relates winning strategies in two-buffer games with a language-theoretic problem that boils down to language inclusion in the case
of a single buffer only. It is the analogy of case (\ref{prop:hierarchy:simimpliesli} of Proposition~\ref{prop:hierarchy} for
games with two buffers.

Suppose the alphabet $\Sigma$ and a distribution function $\sigma: \Sigma \to \{1,2\}$ is given. Let $\Sigma_i := \{ a \in \Sigma \mid \sigma(a) = i\}$ for
any $i \in \{1,2\}$. For a word $w \in \Sigma^\omega$ and an $i \in \{1,2\}$ we write $\project{w}{i}$ for the projection of $w$ onto $\Sigma_i$. 
Note that $\project{w}{i} \in \Sigma_i^* \cup \Sigma_i^\omega$.

Remember that $L(\mathcal{A})$ is used to denote the language of $\mathcal{A}$ seen as a B\"uchi automaton, i.e.\ the set of all words 
$w \in \Sigma^\omega$ for which there is an infinite path labelled by $w$ that visits final states infinitely often.

\begin{lemma}
\label{lem:relationship}
Let $\mathcal{A},\mathcal{B}$ be two automata over an alphabet $\Sigma$ with distribution function $\sigma: \Sigma \to \{1,2\}$. Let
$k_1,k_2 \in (\Nat \cup \{\omega\})$. If $\mathcal{A} \simu{k_1,k_2}{} \mathcal{B}$ then for every word $w \in L(\mathcal{A})$ there is
a $v \in L(\mathcal{B})$ such that for all $i\in\{1,2\}$ we have $\project{w}{i} = \project{v}{i}$.
\end{lemma}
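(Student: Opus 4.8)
The plan is to let \spoiler play out a fixed accepting run of $\mathcal{A}$ on the given word $w$ and to read off the witness $v$ from the run that \duplicator's winning strategy produces in response. First I would fix $w \in L(\mathcal{A})$ together with an accepting run $q_0,q_1,q_2,\ldots$ of $\mathcal{A}$ on $w$, where $w = w_0 w_1 \cdots$, $\Transition{q_j}{w_j}{q_{j+1}}$ for all $j$, and $q_j \in F^{\mathcal{A}}$ for infinitely many $j$. I then consider the play of $\simugame{k_1,k_2}{\mathcal{A}}{\mathcal{B}}$ in which \spoiler always advances along this run, producing the letter $w_j$ in round $j$, while \duplicator plays according to a fixed winning strategy, which exists since $\mathcal{A} \simu{k_1,k_2}{} \mathcal{B}$. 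As this strategy is winning, \duplicator wins the resulting play.

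Because \spoiler's run visits $F^{\mathcal{A}}$ infinitely often, the first disjunct of \duplicator's winning condition (only finitely many accepting $q_i$) fails, so the second disjunct must hold: every symbol put into a buffer is eventually removed, and infinitely many of the round-boundary states $p_i$ lie in $F^{\mathcal{B}}$. I would then read off $v$ as the sequence of letters \duplicator consumes over the whole infinite play, in the order in which she consumes them; concatenating the transitions $\Transition{p}{b}{p'}$ she takes yields an infinite path of $\mathcal{B}$ starting at $p_I$. Since every letter of the infinite word $w$ is eventually consumed, this path is infinite; and since the $p_i$ form a subsequence of its states with infinitely many in $F^{\mathcal{B}}$, the path is an accepting run labelled by $v$. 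Hence $v \in L(\mathcal{B})$.

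The heart of the argument, and the step I expect to require the most care, is the bookkeeping that establishes $\project{v}{i} = \project{w}{i}$ for $i \in \{1,2\}$. The point is the FIFO discipline of each buffer: letters are prepended on insertion and taken from the rear on removal, so buffer $j$ always holds exactly the $\Sigma_j$-letters produced so far and not yet consumed, and they are removed in their original order. I would make this precise by an invariant, proved by induction on rounds: at the end of round $r$, the $\Sigma_j$-letters already removed (in removal order), followed by the current content $\beta^r_j$ read from its last symbol to its first, equal $\project{(w_0\cdots w_r)}{j}$. In the limit all letters are removed, so the subsequence of $v$ taken from buffer $j$ is exactly $\project{w}{j}$. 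Finally, since a letter removed from buffer $j$ necessarily lies in $\Sigma_j$, that subsequence is precisely $\project{v}{j}$, giving $\project{v}{j} = \project{w}{j}$.

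This is exactly where the partial commutativity introduced by the two buffers becomes visible: the contributions of the two buffers may be interleaved arbitrarily in $v$, so $v$ itself need not equal $w$, but neither projection is disturbed, which is all the lemma demands. The only subtlety beyond the invariant is to confirm that $v$ is genuinely an $\omega$-word: this follows because $|v| = |\project{v}{1}| + |\project{v}{2}| = |\project{w}{1}| + |\project{w}{2}| = |w|$ is infinite, so the accepting-run argument of the second paragraph indeed applies.
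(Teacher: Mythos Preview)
Your proposal is correct and follows essentially the same approach as the paper: have \spoiler play a fixed accepting run on $w$, let \duplicator's winning strategy produce an accepting run on some $v$, and then argue from the FIFO discipline and the ``every inserted symbol is eventually removed'' clause that the two projections coincide. The paper's version is terser---it simply lists the three buffer facts (symbols consumed were produced, symbols produced are eventually consumed, per-buffer order is preserved) without spelling out your invariant or the $|v|=\omega$ check---but the underlying argument is the same.
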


\begin{proof}
Let $\mathcal{A} = (Q^{\mathcal{A}},\Sigma,q_I^{\mathcal{A}},\delta^{\mathcal{A}},F^{\mathcal{B}})$ and 
$\mathcal{B} = (Q^{\mathcal{A}},\Sigma,q_I^{\mathcal{A}},\delta^{\mathcal{A}},F^{\mathcal{B}})$. Suppose there is a word 
$w =a_0a_1\ldots \in L(\mathcal{A})$, i.e.\ there is an infinite path $\rho = q_0,a_0,q_1,a_1,\ldots$ such that $q_i \in F^{\mathcal{A}}$ for 
infinitely many $i$. Suppose furthermore that we have $\mathcal{A} \simu{k_1,k_2}{} \mathcal{B}$, i.e.\ player \duplicator has a winning 
strategy $\zeta$ for the game $\simugame{k_1,k_2}{\mathcal{A}}{\mathcal{B}}$. We remark that the values of $k_1$ and $k_2$ are irrelevant 
for what follows; only their existence is needed. 

So suppose that \spoiler chooses the run $\rho$. Following $\zeta$,
player $\duplicator$ will construct -- possibly in chunks -- an
infinite path $\rho' = p_0,b_0,p_1,b_1,\ldots$ through
$\mathcal{B}$. Since the resulting play is winning for \duplicator,
this path must contain infinitely many states in
$F^{\mathcal{B}}$. Thus, we have
$v := b_0b_1\ldots \in L(\mathcal{B})$. It remains to be seen that for
every $i \in \{1,2\}$ we have $\project{w}{i} = \project{v}{i}$. This
is a simple consequence of three facts.
\begin{enumerate}
\item \duplicator can only choose transitions with symbols that have been put into one of the buffers by \spoiler. Hence, for every $j$ 
      there is a $j'$ such that $b_j = a_{j'}$.  
\item Every symbol that gets put into the buffer is eventually removed from it. Hence,
      for every $j$ there is a $j'$ such that $a_j = b_{j'}$.
\item The buffers make sure that the order of two symbols from the same $\Sigma_i$ is being preserved.
\end{enumerate}
Thus, we get $\project{w}{i} = \project{v}{i}$ for every $i$. 
\end{proof}

Note that in the case of the single buffer simulation game, we necessarily have $v=w$ in the formulation
of this lemma which says nothing more than $L(\mathcal{A}) \subseteq L(\mathcal{B})$. In cases with two buffers, this lemma
predicts language inclusion of the two automata modulo partial commutativity between alphabet symbols of different $\sigma$-index.
The following section shows cases of problems in which such partial commutativity occurs naturally and shows how two-buffer games
can be used to characterise such problems.

\section{Application: Approximating Inclusion of $\omega$-Rational Relations}
\label{sec:transduce}

The main motivation for studying single-buffer games played on two B\"uchi automata is derived from the fact that such enhanced
simulations approximate language inclusion between B\"uchi automata, an important problem in the specification and verification of
reactive systems \cite{Emerson96}. Remember that simulations based on a single bounded buffer ($\simu{k}$ for some $k \in \Nat$)
provide polynomial-time approximations to a problem that is PSPACE-complete. Using an unbounded buffer defeats the purpose here because
$\simu{\omega}$ still only provides an approximation to language inclusion but additionally it is EXPTIME-complete 
\cite{DBLP:journals/corr/HutagalungLL14}, i.e.\ even harder than B\"uchi inclusion.

We consider the situation for rational relations\footnote{Here we only consider the case of binary relations. The generalisation to relations between
a larger number of words is straight-forward.}. Let $\Sigma_{\mathsf{in}}$ and $\Sigma_{\mathsf{out}}$ be finite alphabets, 
usually referred to as \emph{input} and \emph{output} alphabets. We write $\Sigma^\infty$ for $\Sigma^* \cup \Sigma^\omega$.
An \emph{infinitary rational relation} is an $R \subseteq \Sigma_{\mathsf{in}}^\infty \times \Sigma_{\mathsf{out}}^\infty$ that is recognised in 
the following sense.

\begin{definition}[see, e.g., \cite{GN84}]\rm
  A 2-head B\"uchi transducer is a $\mathcal{T} = (Q,\Sigma_{\mathsf{in}},\Sigma_{\mathsf{out}},q_I,\delta,F)$ where $Q$ is a finite set of states 
with a designated starting state $q_I \in Q$ and a designated set of \emph{final} states $F \subseteq Q$; $\Sigma_{\mathsf{in}}$ and $\Sigma_{\mathsf{out}}$ are two finite 
alphabets and $\delta \subseteq Q \times \Sigma_{\mathsf{in}}^* \times \Sigma_{\mathsf{out}}^* \times Q$ is a finite set of transitions.

A \emph{run} is an infinite sequence $q_0,u_0,v_0,q_1,u_1,v_1,\ldots$ over $Q \times \Sigma_{\mathsf{in}}^* \times \Sigma_{\mathsf{out}}^*$ such that for all $i \in \Nat$
we have $(q_i,u_i,v_i,q_{i+1}) \in \delta$. The run is accepting if $q_0 = q_I$ and there is some $q \in F$ such that $q = q_i$ for infinitely many $i$.
In that case, we say that the pair $(u,v) \in \Sigma_{\mathsf{in}}^\infty \times \Sigma_{\mathsf{out}}^\infty$ with $u = u_0u_1u_2\ldots$ and $v=v_0v_1v_2\ldots$ is
\emph{accepted} or \emph{recognised} by $\mathcal{T}$. 

The relation recognised by $\mathcal{T}$ is $R(\mathcal T) := \{ (u,v) \mid (u,v)$ is recognised by $\mathcal{T} \}$.
\end{definition}

The equivalence problem for infinitary rational relations -- given
$\mathcal{T}$, $\mathcal{T}'$, decide whether or not
$R(\mathcal{T}) = R(\mathcal{T}')$ -- and, hence, the inclusion
problem is undecidable~\cite{berstel1979transductions}, even for
deterministic transducers in the sense above. The latter can easily be
shown by a reduction from the infinitary PCP. 
Equivalence becomes decidable for B\"uchi transducers that read a pair of
exactly one input and one output symbol in each step and, hence, can
be seen as B\"uchi-automata over the alphabet
$\Sigma_{\mathsf{in}} \times \Sigma_{\mathsf{out}}$.

Transducers operating on infinite words have important applications. For instance, they are used to represent components of infinite structures, 
namely $\omega$-automatic ones \cite{LICS00*51} where decidability results are obtained for model-checking like problems; they
are used in concurrency to specify the synchronisation behaviour of parallel processes \cite{conf/caap/Nivat81}; they can represent
functions and relations on real numbers \cite{conf/csr/Carton10}; etc.

Given the undecidability of the inclusion problem for infinitary rational relations on one hand and their applications on the other, it
is fair to ask whether or not there are possibilities to approximate relation inclusion in the form of algorithms that are sound but
incomplete for instance. Such a possibility is given by the two-buffer simulations developed in the previous section: just like
single-buffer simulation games approximate language inclusion between B\"uchi automata, two-buffer games approximate inclusion between
infinitary rational relations, as is shown in the following.

\begin{definition} \rm
A 2-head B\"uchi transducer $\mathcal{T} = (Q,\Sigma_{\mathsf{in}},\Sigma_{\mathsf{out}},q_I,\delta,F)$ is \emph{normalised} if its transition relation is of the form
\begin{displaymath}
\delta \enspace \subseteq \enspace (Q \times \Sigma_{\mathsf{in}} \times \{ \epsilon \} \times Q) \cup (Q \times \{\epsilon\} \times \Sigma_{\mathsf{out}} \times Q)\ .
\end{displaymath}
\end{definition}

It should be clear that every 2-head B\"uchi transducer can be normalised preserving the relation that it recognises, and that this involves
a linear blow-up at most: every transition that consumes the input-output pair $(a_1\ldots a_n,b_1\ldots b_m)$ can be simulated by $n+m$
transitions, each of which consumes exactly one letter from either input or output. Moreover, w.l.o.g.\ we can assume 
$\Sigma_{\mathsf{in}} \cap \Sigma_{\mathsf{out}} = \emptyset$, i.e.\ every symbol is either input or output but not both. Then the transition relation of such a normalised
transducer can be seen as of type $Q \times (\Sigma_{\mathsf{in}} \cup \Sigma_{\mathsf{out}}) \times Q$ and, syntactically, this is nothing more than a B\"uchi automaton
as used in Section~\ref{sec:prel}.

\begin{theorem}
\label{thm:transducergame}
Let $\mathcal{T}$ and $\mathcal{T}'$ be two normalised 2-head B\"uchi transducers over inputs $\Sigma_{\mathsf{in}}$ and outputs $\Sigma_{\mathsf{out}}$
with $\Sigma_{\mathsf{in}} \cap \Sigma_{\mathsf{out}} = \emptyset$. Let $\Sigma := \Sigma_{\mathsf{in}} \cup \Sigma_{\mathsf{out}}$ with the alphabet
mapping function $\sigma(a) = 1$ if $a \in \Sigma_{\mathsf{in}}$ and $\sigma(a) = 2$ otherwise. Then we have: if 
$\mathcal{T} \simu{\omega,\omega} \mathcal{T}'$ then $R(\mathcal{T}) \subseteq R(\mathcal{T}')$. 
\end{theorem}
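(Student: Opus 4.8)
The plan is to leverage Lemma~\ref{lem:relationship} directly, since it already does most of the work. That lemma tells us: if $\mathcal{T} \simu{\omega,\omega} \mathcal{T}'$, then for every $w \in L(\mathcal{T})$ there is a $v \in L(\mathcal{T}')$ with $\project{w}{1} = \project{v}{1}$ and $\project{w}{2} = \project{v}{2}$. The task is therefore to translate this statement about B\"uchi languages and projections into the language of the recognised relations $R(\mathcal{T}) \subseteq R(\mathcal{T}')$. The key observation is the correspondence between an accepting run of a normalised transducer and an accepting run of the same syntactic object viewed as a B\"uchi automaton over $\Sigma = \Sigma_{\mathsf{in}} \cup \Sigma_{\mathsf{out}}$.

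First I would set up the dictionary between the two views. A normalised transducer $\mathcal{T}$ has transitions that read exactly one symbol from $\Sigma_{\mathsf{in}}$ (emitting $\epsilon$ on the output tape) or exactly one symbol from $\Sigma_{\mathsf{out}}$ (reading $\epsilon$ on the input tape); since $\Sigma_{\mathsf{in}} \cap \Sigma_{\mathsf{out}} = \emptyset$, the transition relation is literally of type $Q \times \Sigma \times Q$. An accepting transducer run over $(u,v)$ corresponds bijectively to an accepting B\"uchi run over a single word $w \in \Sigma^\omega$, and under this correspondence the input component $u$ is exactly the $\Sigma_{\mathsf{in}}$-projection $\project{w}{1}$ and the output component $v$ is exactly the $\Sigma_{\mathsf{out}}$-projection $\project{w}{2}$. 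In other words, $(u,v) \in R(\mathcal{T})$ if and only if there exists $w \in L(\mathcal{T})$ with $\project{w}{1} = u$ and $\project{w}{2} = v$, and likewise for $\mathcal{T}'$. This is precisely why the distribution function $\sigma$ was chosen to send $\Sigma_{\mathsf{in}}$ to buffer $1$ and $\Sigma_{\mathsf{out}}$ to buffer $2$.

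With this dictionary in place the proof is immediate. Take any $(u,v) \in R(\mathcal{T})$. By the correspondence there is a $w \in L(\mathcal{T})$ with $\project{w}{1} = u$ and $\project{w}{2} = v$. Applying Lemma~\ref{lem:relationship} (with $k_1 = k_2 = \omega$), we obtain some $v' \in L(\mathcal{T}')$ with $\project{v'}{1} = \project{w}{1} = u$ and $\project{v'}{2} = \project{w}{2} = v$. Reading $v'$ back as an accepting transducer run of $\mathcal{T}'$ then yields $(\project{v'}{1}, \project{v'}{2}) = (u,v) \in R(\mathcal{T}')$. Since $(u,v)$ was arbitrary, $R(\mathcal{T}) \subseteq R(\mathcal{T}')$.

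The only real subtlety — and the step I would be most careful about — is the correspondence between transducer runs and B\"uchi runs at the level of \emph{infinite} words, specifically the handling of the finitary versus infinitary cases for $u$ and $v$. Lemma~\ref{lem:relationship} works with $w \in \Sigma^\omega$ and notes that projections live in $\Sigma_i^* \cup \Sigma_i^\omega$; this matches the fact that a recognised pair $(u,v) \in \Sigma_{\mathsf{in}}^\infty \times \Sigma_{\mathsf{out}}^\infty$ may have one component finite and the other infinite (the run loops forever reading only from one tape). I would verify that an accepting B\"uchi run on $w$ genuinely recovers an accepting transducer run yielding exactly the pair $(\project{w}{1}, \project{w}{2})$ regardless of whether either projection is finite or infinite, so that the notion of acceptance (a final state visited infinitely often) transfers faithfully in both directions. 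Beyond this bookkeeping, no further work is needed, as the hierarchy and the projection lemma have already done the heavy lifting.
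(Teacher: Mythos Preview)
Your proposal is correct and follows essentially the same approach as the paper: both invoke Lemma~\ref{lem:relationship} together with the observation that, for a normalised transducer viewed as a B\"uchi automaton over $\Sigma_{\mathsf{in}}\cup\Sigma_{\mathsf{out}}$, a pair $(u,v)$ lies in $R(\mathcal{T})$ iff some $w\in L(\mathcal{T})$ has $\project{w}{1}=u$ and $\project{w}{2}=v$. The paper's proof is a one-liner stating exactly this correspondence, while you spell out the dictionary and the finitary/infinitary bookkeeping more explicitly.
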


\begin{proof}
This follows immediately from Lemma~\ref{lem:relationship} with the observation that $R(\mathcal{T}) \subseteq R(\mathcal{T}')$ iff for every
word $w \in \Sigma^\omega$: if $(\project{w}{\mathsf{in}},\project{w}{\mathsf{out}}) \in R(\mathcal{T})$ then $(\project{w}{\mathsf{in}},\project{w}{\mathsf{out}}) \in R(\mathcal{T}')$.
\end{proof}

Combining this with Thm.~\ref{thm:hierarchies} and Cor.~\ref{cor:decidable_ptime} we obtain polynomial-time computable approximations for the
inclusion problem between infinitary recognisable relations.

\begin{corollary}
Let $k_1,k_2 \in \Nat$. We have that 
\begin{itemize}
\item it is decidable in polynomial time whether or not $\mathcal{T} \simu{k_1,k_2} \mathcal{T}'$ holds for arbitrary 2-head B\"uchi transducers, and
\item if $\mathcal{T} \simu{k_1,k_2} \mathcal{T}'$ holds then we have $R(\mathcal{T}) \subseteq R(\mathcal{T}')$.
\end{itemize}
\end{corollary}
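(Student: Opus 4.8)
The plan is to derive both parts of the corollary as direct consequences of the three preceding results, the only real work being the reduction from arbitrary transducers to the setting in which those results apply. The two claims correspond to the two directions already isolated: polynomial-time decidability rests on Corollary~\ref{cor:decidable_ptime}, while soundness of the approximation rests on Theorem~\ref{thm:hierarchies} together with Theorem~\ref{thm:transducergame}.

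For the decidability claim, I would first bring an arbitrary pair of 2-head B\"uchi transducers into the shape required by the earlier machinery. As noted in the discussion preceding Theorem~\ref{thm:transducergame}, every transducer can be normalised with at most a linear blow-up so that each transition reads a single symbol from either the input or the output. After relabelling the output alphabet with fresh symbols---a bijective renaming that can be undone and hence preserves the recognised relation---we may assume $\Sigma_{\mathsf{in}} \cap \Sigma_{\mathsf{out}} = \emptyset$, so that the distribution function $\sigma$ of Theorem~\ref{thm:transducergame} is well defined. A normalised transducer is then syntactically a B\"uchi automaton over $\Sigma = \Sigma_{\mathsf{in}} \cup \Sigma_{\mathsf{out}}$, and $\mathcal{T} \simu{k_1,k_2} \mathcal{T}'$ is exactly an instance of the two-buffer simulation relation $\simu{k_1,k_2}$ on these automata. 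Since $k_1,k_2 \in \Nat$ are fixed, Corollary~\ref{cor:decidable_ptime} decides this relation in polynomial time; as normalisation and relabelling are linear-time preprocessing steps, the whole procedure runs in polynomial time.

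For the soundness claim, I would chain the two inclusions. For any $k_1,k_2 \in \Nat$ we have $(k_1,k_2) \le (\omega,\omega)$, so Theorem~\ref{thm:hierarchies} gives $\simu{k_1,k_2} \subseteq \simu{\omega,\omega}$; hence $\mathcal{T} \simu{k_1,k_2} \mathcal{T}'$ implies $\mathcal{T} \simu{\omega,\omega} \mathcal{T}'$. Applying Theorem~\ref{thm:transducergame} to the normalised, disjoint-alphabet transducers then yields $R(\mathcal{T}) \subseteq R(\mathcal{T}')$, and the inclusion transfers back to the original transducers because the relabelling used above is a bijection on letters and therefore preserves relation inclusion in both directions.

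The step I expect to require the most care is not any of the invocations themselves but the bookkeeping of the preprocessing: one must check that normalisation and the relabelling to disjoint alphabets are simultaneously relation-preserving and polynomial, so that both the complexity bound and the soundness statement genuinely transfer from the automata-theoretic setting of the cited results to arbitrary 2-head B\"uchi transducers. Once this reduction is set up cleanly, both bullets follow mechanically from Corollary~\ref{cor:decidable_ptime}, Theorem~\ref{thm:hierarchies}, and Theorem~\ref{thm:transducergame}.
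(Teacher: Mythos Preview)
Your proposal is correct and follows exactly the approach the paper takes: the corollary is stated as an immediate combination of Theorem~\ref{thm:transducergame} with Theorem~\ref{thm:hierarchies} and Corollary~\ref{cor:decidable_ptime}, and you invoke precisely these three results in the same roles. Your additional care about the normalisation and alphabet-disjointness preprocessing is appropriate and merely spells out what the paper leaves to the discussion preceding Theorem~\ref{thm:transducergame}.
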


The approximation is indeed not complete as the following example shows.

\begin{example}
Consider the normalised transducers $\mathcal{T}$ (left) and $\mathcal{T}'$ (right) over the alphabets $\Sigma_{\mathsf{in}} = \{ a\}$ and
$\Sigma_{\mathsf{out}} = \{ b,c\}$.
\begin{center}
\begin{tikzpicture}[initial text={}, node distance=10mm, every state/.style={minimum size=4mm}, thick]

  \node[state,initial]   (p0)                     {};
  \node[state,accepting] (p1) [right of=p0]       {};
  \node[state]           (p2) [right of=p1]       {};
  \node[state,accepting] (p3) [right of=p2]       {};

  \path[->] (p0) edge [bend left] node [above] {$a$} (p1)
            (p1) edge [bend left] node [below] {$b$} (p0)
                 edge             node [above] {$b$} (p2)
            (p2) edge [bend left] node [above] {$a$} (p3)
            (p3) edge [bend left] node [below] {$c$} (p2);

  \node[state,initial]   (q0) [right of=p3, node distance=3cm] {};
  \node[state]           (q1) [right of=q0]       {};
  \node[state]           (q2) [right of=q1]       {};
  \node[state,accepting] (q3) [above right of=q2] {};
  \node[state]           (q4) [right of=q3]       {};
  \node[state,accepting] (q5) [below right of=q2] {};
  \node[state]           (q6) [right of=q5]       {};

  \path[->] (q0) edge [bend left] node [above]      {$a$} (q1)
            (q1) edge [bend left] node [below]      {$b$} (q0)
                 edge             node [above]      {$b$} (q2)
            (q2) edge             node [above left] {$a$} (q3)
                 edge             node [below left] {$a$} (q5)
            (q3) edge [bend left] node [above]      {$b$} (q4)
            (q4) edge [bend left] node [below]      {$a$} (q3)
            (q5) edge [bend left] node [above]      {$c$} (q6)
            (q6) edge [bend left] node [below]      {$a$} (q5);
            
\end{tikzpicture}
\end{center}
Both recognise the infinitary relation described by $(a^\omega,b^\omega) \cup (a^\omega,b^*c^\omega)$. Thus, relation inclusion is given between
them. On the other hand, it is not difficult to see that \spoiler has a winning strategy for $\simugame{k_1,k_2}{\mathcal{T}}{\mathcal{T}'}$ for
all $k_1,k_2 \in \Nat \cup \{\omega\}$: he cycles on the left loop until \duplicator leaves her left cycle (which she has to eventually since it
contains no final states) and then, depending on where \duplicator went, either continues cycling there or moves over to his right cycle so 
that he produces symbols that \duplicator cannot consume anymore. 
\end{example}

\section{Conclusion and Further Work}
\label{sec:concl}

We have studied simulation games that use two FIFO buffers to store
\spoiler's choices for a limited amount of time before \duplicator has
to respond to them. We have shown how the correspondence between
(single-)buffered simulation games and the language inclusion problem
for B\"uchi automata naturally extends to two-buffer simulations and
inclusion problems for rational relations over infinite words. The
fact that games with bounded capacities can be solved in polynomial
time, and that higher buffer capacities do not make \duplicator weaker
yields polynomial-time approximations to such inclusion problems with
underlying partial commutativity.

Our results are closely related to the theory of (infinite)
Mazurkiewicz traces (see \cite{dm97handbook} for a general overview on
this theory). The reason is that relations in
$\Sigma_1^\omega\times\Sigma_2^\omega$ can be seen as languages of
real traces over particular independence alphabets (namely, complete
bipartite ones). More precisely: Fix a mapping
$\sigma\colon\Sigma\to\{1,2\}$ and let
$\Sigma_i=\{a\in\Sigma\mid \sigma(a)=i\}$. Furthermore, let
$\mathcal A$ and $\mathcal B$ be two B\"uchi-automata with
$\mathcal A\simu{k_1,k_2}\mathcal B$. Then, for any word
$w\in L(\mathcal A)$, there exists a word $v\in L(\mathcal B)$ with
$\project{w}{i}=\project{v}{i}$ for $i=1,2$. In trace-theoretic terms,
this means that the trace closure of $L(\mathcal A)$ is contained in
the trace closure of $L(\mathcal B)$ (where the underlying
independence relation is
$I=(\Sigma_1\times\Sigma_2)\cup(\Sigma_2\times\Sigma_1)$, i.e.,
complete bipartite). Hence, for such special independence relations,
we obtain polynomial-time computable approximations for the inclusion
problem of trace closures of $\omega$-regular languages (which is
undecidable). It is possible to extend the results of this paper
to $m\ge 2$ buffers, which allows similar polynomial-time
computable approximations to be obtained in case of complete $m$-partite independence
relations. In our ongoing work \cite{HutHKLL16},
we extend $\sigma$ and the
multi-buffer simulation game in a way that allows arbitrary
independence relations to be captured (not just complete $m$-partite ones).

Regarding decidability,
we show that
$\simu{\omega,0}{}$ is highly undecidable (i.e., not
arithmetical) \cite{HutHKLL16}. This is in sharp contrast with the decidability result
for $\simu{\omega}{}$ from~\cite{DBLP:journals/corr/HutagalungLL14}
and the fact that bounded buffers can be encoded in the control state.

An idea for further work is the following.  Recall that
$\simu{\omega}{}$ is EXPTIME-complete and it has a PSPACE-complete
variant in which \duplicator may never consume an element from the
buffer \emph{and} leave others in, i.e.\ she always has to flush the
buffer whenever she moves
\cite{DBLP:journals/corr/HutagalungLL14}. One can study such a variant
for two-buffer simulations as well. This becomes technically a little
bit more tedious but not conceptually problematic; one could require
her to flush \emph{both} buffers or just \emph{one}, etc. The same
reduction of PCP shows that the simulation remains undecidable if two
buffers are unbounded. However, it would be interesting to determine
the status of $\simu{\omega,k}{}$: is it highly undecidable as
$\simu{\omega,0}$ \cite{HutHKLL16},
arithmetical or even decidable
and, if so, what is its precise complexity.

\bibliographystyle{eptcs}
\bibliography{literature}

\begin{thebibliography}{10}
\providecommand{\bibitemdeclare}[2]{}
\providecommand{\surnamestart}{}
\providecommand{\surnameend}{}
\providecommand{\urlprefix}{Available at }
\providecommand{\url}[1]{\texttt{#1}}
\providecommand{\href}[2]{\texttt{#2}}
\providecommand{\urlalt}[2]{\href{#1}{#2}}
\providecommand{\doi}[1]{doi:\urlalt{http://dx.doi.org/#1}{#1}}
\providecommand{\bibinfo}[2]{#2}

\bibitemdeclare{inproceedings}{conf/tacas/AbdullaBHKV08}
\bibitem{conf/tacas/AbdullaBHKV08}
\bibinfo{author}{P.~A. \surnamestart Abdulla\surnameend},
  \bibinfo{author}{A.~\surnamestart Bouajjani\surnameend},
  \bibinfo{author}{L.~\surnamestart Hol{\'i}k\surnameend},
  \bibinfo{author}{L.~\surnamestart Kaati\surnameend} \&
  \bibinfo{author}{T.~\surnamestart Vojnar\surnameend} (\bibinfo{year}{2008}):
  \emph{\bibinfo{title}{Computing Simulations over Tree Automata}}.
\newblock In: {\sl \bibinfo{booktitle}{{TACAS'08}}}, {\sl
  \bibinfo{series}{Lecture Notes in Computer Science}} \bibinfo{volume}{4963},
  \bibinfo{publisher}{Springer}, pp. \bibinfo{pages}{93--108},
  \doi{10.1007/978-3-540-78800-3\_8}.

\bibitemdeclare{book}{berstel1979transductions}
\bibitem{berstel1979transductions}
\bibinfo{author}{J.~\surnamestart Berstel\surnameend} (\bibinfo{year}{1979}):
  \emph{\bibinfo{title}{{Transductions and Context-Free Languages}}}.
\newblock \bibinfo{series}{Leitf{\"a}den der angewandten Mathematik und
  Mechanik}, \bibinfo{publisher}{Teubner}, \doi{10.1007/978-3-663-09367-1}.

\bibitemdeclare{inproceedings}{LICS00*51}
\bibitem{LICS00*51}
\bibinfo{author}{A.~\surnamestart Blumensath\surnameend} \&
  \bibinfo{author}{E.~\surnamestart Gr{\"a}del\surnameend}
  (\bibinfo{year}{2000}): \emph{\bibinfo{title}{Automatic Structures}}.
\newblock In: {\sl \bibinfo{booktitle}{Proc.\ 15th IEEE Symp.\ on Logic in
  Computer Science, {LICS'00}}}, \bibinfo{publisher}{IEEE}, pp.
  \bibinfo{pages}{51--62}, \doi{10.1109/LICS.2000.855755}.

\bibitemdeclare{article}{BrandZ83}
\bibitem{BrandZ83}
\bibinfo{author}{D.~\surnamestart Brand\surnameend} \&
  \bibinfo{author}{P.~\surnamestart Zafiropulo\surnameend}
  (\bibinfo{year}{1983}): \emph{\bibinfo{title}{On Communicating Finite-State
  Machines}}.
\newblock {\sl \bibinfo{journal}{J. ACM}}
  \bibinfo{volume}{30}(\bibinfo{number}{2}), pp. \bibinfo{pages}{323--342},
  \doi{10.1145/322374.322380}.

\bibitemdeclare{inproceedings}{conf/csr/Carton10}
\bibitem{conf/csr/Carton10}
\bibinfo{author}{O.~\surnamestart Carton\surnameend} (\bibinfo{year}{2010}):
  \emph{\bibinfo{title}{Right-Sequential Functions on Infinite Words}}.
\newblock In: {\sl \bibinfo{booktitle}{Proc.\ 5th Int.\ Conf.\ on Computer
  Science in Russia, {CSR'10}}}, {\sl \bibinfo{series}{Lecture Notes in
  Computer Science}} \bibinfo{volume}{6072}, \bibinfo{publisher}{Springer}, pp.
  \bibinfo{pages}{96--106}, \doi{10.1007/978-3-642-13182-0}.

\bibitemdeclare{inproceedings}{ChambartS08}
\bibitem{ChambartS08}
\bibinfo{author}{P.~\surnamestart Chambart\surnameend} \&
  \bibinfo{author}{P.~\surnamestart Schnoebelen\surnameend}
  (\bibinfo{year}{2008}): \emph{\bibinfo{title}{Mixing Lossy and Perfect Fifo
  Channels}}.
\newblock In: {\sl \bibinfo{booktitle}{CONCUR'08}}, {\sl
  \bibinfo{series}{LNCS}} \bibinfo{volume}{5201},
  \bibinfo{publisher}{Springer}, pp. \bibinfo{pages}{340--355},
  \doi{10.1007/978-3-540-85361-9\_28}.

\bibitemdeclare{inproceedings}{MayrC13}
\bibitem{MayrC13}
\bibinfo{author}{L.~\surnamestart Clemente\surnameend} \&
  \bibinfo{author}{R.~\surnamestart Mayr\surnameend} (\bibinfo{year}{2013}):
  \emph{\bibinfo{title}{Advanced automata minimization}}.
\newblock In: {\sl \bibinfo{booktitle}{POPL'13}}, \bibinfo{publisher}{ACM}, pp.
  \bibinfo{pages}{63--74}, \doi{10.1145/2429069.2429079}.

\bibitemdeclare{incollection}{dm97handbook}
\bibitem{dm97handbook}
\bibinfo{author}{V.~\surnamestart Diekert\surnameend} \&
  \bibinfo{author}{Y.~\surnamestart M{\'e}tivier\surnameend}
  (\bibinfo{year}{1997}): \emph{\bibinfo{title}{Partial Commutation and
  Traces}}.
\newblock In \bibinfo{editor}{G.~\surnamestart Rozenberg\surnameend} \&
  \bibinfo{editor}{A.~\surnamestart Salomaa\surnameend}, editors: {\sl
  \bibinfo{booktitle}{Handbook of Formal Languages}}, \bibinfo{volume}{3},
  \bibinfo{publisher}{Springer}, pp. \bibinfo{pages}{457--533},
  \doi{10.1007/978-3-642-59126-6\_8}.

\bibitemdeclare{inproceedings}{lncs575*255}
\bibitem{lncs575*255}
\bibinfo{author}{D.~L. \surnamestart Dill\surnameend}, \bibinfo{author}{A.~J.
  \surnamestart Hu\surnameend} \& \bibinfo{author}{H.~\surnamestart
  Wong-Toi\surnameend} (\bibinfo{year}{1992}): \emph{\bibinfo{title}{Checking
  for Language Inclusion Using Simulation Preorders}}.
\newblock In: {\sl \bibinfo{booktitle}{CAV'91}}, {\sl \bibinfo{series}{LNCS}}
  \bibinfo{volume}{575}, \bibinfo{publisher}{Springer}, pp.
  \bibinfo{pages}{255--265}, \doi{10.1007/3-540-55179-4\_25}.

\bibitemdeclare{inbook}{Emerson96}
\bibitem{Emerson96}
\bibinfo{author}{E.~A. \surnamestart Emerson\surnameend}
  (\bibinfo{year}{1996}): \emph{\bibinfo{title}{Automated Temporal Reasoning
  about Reactive Systems}}, pp. \bibinfo{pages}{41--101}.
\newblock {\sl \bibinfo{series}{LNCS}} \bibinfo{volume}{1043},
  \bibinfo{publisher}{Springer}, \bibinfo{address}{New York, NY, USA},
  \doi{10.1007/3-540-60915-6\_3}.

\bibitemdeclare{inproceedings}{EL86}
\bibitem{EL86}
\bibinfo{author}{E.~A. \surnamestart Emerson\surnameend} \&
  \bibinfo{author}{C.~L. \surnamestart Lei\surnameend} (\bibinfo{year}{1986}):
  \emph{\bibinfo{title}{Efficient Model Checking in Fragments of the
  Propositional {$\mu$}--Calculus}}.
\newblock In: {\sl \bibinfo{booktitle}{Symposion on Logic in Computer
  Science}}, \bibinfo{publisher}{IEEE}, \bibinfo{address}{Washington, D.C.,
  USA}, pp. \bibinfo{pages}{267--278}.

\bibitemdeclare{inproceedings}{conf/concur/Etessami02}
\bibitem{conf/concur/Etessami02}
\bibinfo{author}{K.~\surnamestart Etessami\surnameend} (\bibinfo{year}{2002}):
  \emph{\bibinfo{title}{A Hierarchy of Polynomial-Time Computable Simulations
  for Automata}}.
\newblock In: {\sl \bibinfo{booktitle}{{CONCUR'02}}}, {\sl
  \bibinfo{series}{Lecture Notes in Computer Science}} \bibinfo{volume}{2421},
  \bibinfo{publisher}{Springer}, pp. \bibinfo{pages}{131--144},
  \doi{10.1007/3-540-45694-5\_10}.

\bibitemdeclare{inproceedings}{EtessamiWS01}
\bibitem{EtessamiWS01}
\bibinfo{author}{K.~\surnamestart Etessami\surnameend},
  \bibinfo{author}{T.~\surnamestart Wilke\surnameend} \& \bibinfo{author}{R.~A.
  \surnamestart Schuller\surnameend} (\bibinfo{year}{2001}):
  \emph{\bibinfo{title}{Fair Simulation Relations, Parity Games, and State
  Space Reduction for {B}{\"u}chi Automata}}.
\newblock In: {\sl \bibinfo{booktitle}{{ICALP'01}}}, {\sl
  \bibinfo{series}{LNCS}} \bibinfo{volume}{2076},
  \bibinfo{publisher}{Springer}, pp. \bibinfo{pages}{694--707},
  \doi{10.1007/3-540-48224-5\_57}.

\bibitemdeclare{article}{journals/tcs/FritzW05}
\bibitem{journals/tcs/FritzW05}
\bibinfo{author}{C.~\surnamestart Fritz\surnameend} \&
  \bibinfo{author}{T.~\surnamestart Wilke\surnameend} (\bibinfo{year}{2005}):
  \emph{\bibinfo{title}{Simulation relations for alternating {B}{\"u}chi
  automata}}.
\newblock {\sl \bibinfo{journal}{Theor. Comput. Sci}}
  \bibinfo{volume}{338}(\bibinfo{number}{1-3}), pp. \bibinfo{pages}{275--314},
  \doi{10.1016/j.tcs.2005.01.016}.

\bibitemdeclare{article}{GN84}
\bibitem{GN84}
\bibinfo{author}{F.~\surnamestart Gire\surnameend} \&
  \bibinfo{author}{M.~\surnamestart Nivat\surnameend} (\bibinfo{year}{1984}):
  \emph{\bibinfo{title}{Relations rationelles infinitaires}}.
\newblock {\sl \bibinfo{journal}{Calcolo}} \bibinfo{volume}{21}, pp.
  \bibinfo{pages}{91--125}, \doi{10.1007/BF02575909}.

\bibitemdeclare{inproceedings}{GurH82}
\bibitem{GurH82}
\bibinfo{author}{Y.~\surnamestart Gurevich\surnameend} \&
  \bibinfo{author}{L.~\surnamestart Harrington\surnameend}
  (\bibinfo{year}{1982}): \emph{\bibinfo{title}{Trees, Automata, and Games}}.
\newblock In: {\sl \bibinfo{booktitle}{Proceedings of the 14th Annual {ACM}
  Symposium on Theory of Computing, May 5-7, 1982, San Francisco, California,
  {USA}}}, pp. \bibinfo{pages}{60--65}, \doi{10.1145/800070.802177}.

\bibitemdeclare{unpublished}{HutHKLL16}
\bibitem{HutHKLL16}
\bibinfo{author}{M.~\surnamestart Hutagalung\surnameend},
  \bibinfo{author}{N.~\surnamestart Hundeshagen\surnameend},
  \bibinfo{author}{D.~\surnamestart Kuske\surnameend},
  \bibinfo{author}{M.~\surnamestart Lange\surnameend} \&
  \bibinfo{author}{E.~\surnamestart Lozes\surnameend} (\bibinfo{year}{2016}):
  \emph{\bibinfo{title}{Multi-Buffer Simulations for Trace Language
  Inclusion}}.
\newblock \bibinfo{note}{In preparation}.

\bibitemdeclare{inproceedings}{HutagalungLL:LATA13}
\bibitem{HutagalungLL:LATA13}
\bibinfo{author}{M.~\surnamestart Hutagalung\surnameend},
  \bibinfo{author}{M.~\surnamestart Lange\surnameend} \&
  \bibinfo{author}{E.~\surnamestart Lozes\surnameend} (\bibinfo{year}{2013}):
  \emph{\bibinfo{title}{Revealing vs. Concealing: More Simulation Games for
  {B}{\"u}chi Inclusion}}.
\newblock In: {\sl \bibinfo{booktitle}{{LATA'2013}}}, {\sl
  \bibinfo{series}{LNCS}} \bibinfo{volume}{7810},
  \bibinfo{publisher}{Springer}, pp. \bibinfo{pages}{347--358},
  \doi{10.1007/978-3-642-37064-9\_31}.

\bibitemdeclare{inproceedings}{DBLP:journals/corr/HutagalungLL14}
\bibitem{DBLP:journals/corr/HutagalungLL14}
\bibinfo{author}{M.~\surnamestart Hutagalung\surnameend},
  \bibinfo{author}{M.~\surnamestart Lange\surnameend} \&
  \bibinfo{author}{E.~\surnamestart Lozes\surnameend} (\bibinfo{year}{2014}):
  \emph{\bibinfo{title}{Buffered Simulation Games for {B}{\"{u}}chi Automata}}.
\newblock In: {\sl \bibinfo{booktitle}{{AFL'14}}}, {\sl
  \bibinfo{series}{{EPTCS}}} \bibinfo{volume}{151}, pp.
  \bibinfo{pages}{286--300}, \doi{10.4204/EPTCS.151.20}.

\bibitemdeclare{inproceedings}{Jurdzinski/00}
\bibitem{Jurdzinski/00}
\bibinfo{author}{M.~\surnamestart Jurdzi{\'n}ski\surnameend}
  (\bibinfo{year}{2000}): \emph{\bibinfo{title}{Small progress measures for
  solving parity games}}.
\newblock In \bibinfo{editor}{H.~\surnamestart Reichel\surnameend} \&
  \bibinfo{editor}{S.~\surnamestart Tison\surnameend}, editors: {\sl
  \bibinfo{booktitle}{{STACS'00}}}, {\sl \bibinfo{series}{LNCS}}
  \bibinfo{volume}{1770}, \bibinfo{publisher}{Springer}, pp.
  \bibinfo{pages}{290--301}, \doi{10.1007/3-540-46541-3\_24}.

\bibitemdeclare{inproceedings}{conf/caap/Nivat81}
\bibitem{conf/caap/Nivat81}
\bibinfo{author}{M.~\surnamestart Nivat\surnameend} (\bibinfo{year}{1981}):
  \emph{\bibinfo{title}{Infinitary Relations}}.
\newblock In: {\sl \bibinfo{booktitle}{Proc.\ 6th Coll.\ on Trees in Algebra
  and Programming, {CAAP'81}}}, {\sl \bibinfo{series}{Lecture Notes in Computer
  Science}} \bibinfo{volume}{112}, \bibinfo{publisher}{Springer}, pp.
  \bibinfo{pages}{46--75}, \doi{10.1007/3-540-10828-9\_54}.

\bibitemdeclare{article}{post46}
\bibitem{post46}
\bibinfo{author}{E.~\surnamestart Post\surnameend} (\bibinfo{year}{1946}):
  \emph{\bibinfo{title}{A variant of a recursively unsolvable problem}}.
\newblock {\sl \bibinfo{journal}{Bulletin of the American Mathematical
  Society}} \bibinfo{volume}{53}, pp. \bibinfo{pages}{264--268},
  \doi{10.1090/S0002-9904-1946-08555-9}.

\end{thebibliography}

\end{document}